\documentclass[11pt]{article}

\usepackage[T1]{fontenc}
\usepackage[utf8]{inputenc}
\usepackage{lmodern}
\usepackage{microtype}

\usepackage{amsmath,amssymb,amsthm}
\usepackage{mathtools}

\usepackage[letterpaper,margin=1in]{geometry}
\usepackage{graphicx}
\usepackage{booktabs}
\usepackage{array}
\usepackage{tikz}
\usetikzlibrary{arrows,positioning,fit,calc,backgrounds,shapes.geometric}

\usepackage[round,authoryear]{natbib}

\usepackage{xcolor}
\definecolor{linknavy}{HTML}{1F3A93}
\usepackage[colorlinks=true,linkcolor=linknavy,citecolor=linknavy,urlcolor=linknavy]{hyperref}

\theoremstyle{plain}
\newtheorem{theorem}{Theorem}[section]
\newtheorem{proposition}[theorem]{Proposition}

\theoremstyle{definition}
\newtheorem{definition}[theorem]{Definition}
\newtheorem{example}[theorem]{Example}

\theoremstyle{remark}
\newtheorem{remark}[theorem]{Remark}

\theoremstyle{remark}
\newtheorem{outhread}{Ornstein--Uhlenbeck thread}
\newtheorem{cirthread}{Cox--Ingersoll--Ross thread}

\newcommand{\bridge}[1]{\smallskip\noindent\textit{Bridge.}~#1\smallskip}

\newcommand{\E}{\mathbb{E}}
\newcommand{\PP}{\mathbb{P}}
\newcommand{\QQ}{\mathbb{Q}}
\newcommand{\R}{\mathbb{R}}
\newcommand{\Var}{\operatorname{Var}}

\newcommand{\Law}{\operatorname{Law}}
\newcommand{\KL}{\operatorname{KL}}
\newcommand{\Id}{\operatorname{Id}}
\newcommand{\dd}{\,\mathrm{d}}
\newcommand{\e}{\mathrm{e}}
\newcommand{\inner}[2]{\langle #1, #2\rangle}
\newcommand{\norm}[1]{\lVert #1\rVert}
\DeclareMathOperator*{\argmin}{arg\,min}

\newcolumntype{L}[1]{>{\raggedright\arraybackslash}p{#1}}

\title{\textbf{The Mathematics of Modeling the Future}\\[3pt]
\large Filtrations, Forecasts, Markov Semigroups,\\ and the Forward Evolution of Laws}
\author{Miquel Noguer i Alonso\\
\normalsize Artificial Intelligence Finance Institute}
\date{June 18, 2026}

\begin{document}
\maketitle

\begin{abstract}
Modeling the future means specifying conditional laws relative to an evolving information flow and describing how those laws move across time. This paper gives a unified mathematical synthesis of that problem along a single spine. Filtrations encode what is known; conditional expectation gives the optimal point forecast; regular conditional probabilities represent full distributional forecasts; Markov kernels and semigroups propagate observables and laws; and infinitesimal generators encode local dynamics, producing the backward and forward Kolmogorov equations and, in continuous time, stochastic differential equations. Along this spine, martingales isolate irreducible surprise, filtering handles partial observation, finance prices discounted futures under martingale measures, stochastic control chooses among feasible futures, and ergodic theory and large deviations describe the structure of the far future.

The contribution is architectural rather than encyclopedic. The paper makes explicit the connective derivations that turn these classical objects into one forecasting calculus: the tower property becomes the semigroup law; It\^o's formula becomes the backward equation after conditioning; integration by parts gives the forward operator; and generator perturbations become forecast and model-risk distortions. This perspective clarifies that forecasting is not merely extrapolation from data, but the construction of a dynamically coherent family of conditional distributions whose evolution is constrained by information, geometry, and admissible model classes.

Two contrasting closed-form threads are carried throughout: the Gaussian Ornstein--Uhlenbeck process and the non-Gaussian, positivity-preserving Cox--Ingersoll--Ross process. These examples show how the same abstract machinery produces explicit transition laws, invariant distributions, spectral decompositions, likelihoods, term-structure formulae, and long-run asymptotics in different geometries. The synthesis also recasts density evolution as a Wasserstein gradient flow of free energy, places forecasting and estimation in Hilbert, Fisher--Rao, and Wasserstein geometries, gives a discrete-time dictionary for empirical practice, and emphasizes the practical caveat that every forecast inherits uncertainty from the estimated generator. The result is a compact mathematical map from information to prediction, from local dynamics to global laws, and from idealized stochastic models to the model-risk problems encountered in empirical forecasting.
\end{abstract}

\noindent\textbf{Keywords:} filtrations; conditional expectation; Markov processes; transition semigroups; infinitesimal generators; Kolmogorov equations; stochastic differential equations; martingales; filtering; equivalent martingale measures; Wasserstein gradient flows; ergodicity; large deviations.

\newpage
\tableofcontents
\newpage

\section{Introduction: the future as a conditional law}

A mathematical model of the future must answer two questions. First, relative to what information is the forecast being made? Second, how does the conditional law move forward once that information is fixed? The first question is informational and is encoded by a \emph{filtration}. The second is dynamical and is encoded by a \emph{transition mechanism}: a kernel, a semigroup, a generator, an SDE, or a PDE.

The most economical definition is the following. At time $t$, given an information set $\mathcal{F}_t$, the future of a process $X=(X_s)_{s\ge 0}$ is the conditional law
\[
\Law\!\big((X_{t+u})_{u\ge 0}\mid \mathcal{F}_t\big).
\]
Point forecasts are summaries of this conditional law, most importantly conditional expectations. Distributional forecasts require the whole regular conditional probability. Dynamics specify how these conditional laws change as $t$ and the forecast horizon move. The paper follows the standard progression from information to dynamics:
\[
\text{filtration}\to\text{conditional law}\to\text{Markov kernel}\to\text{semigroup}\to\text{generator}\to\text{PDE/SDE}.
\]
The martingale component then isolates the unpredictable part of the future. Filtering treats the case in which the state is only indirectly observed. Finance changes the probability measure so that no-arbitrage prices become martingales. Long-run theory describes the far future through invariant measures, ergodic averages, and large-deviation rate functions.

\paragraph{What this paper is, and what it adds.} The individual building blocks below are classical. The contribution is a usable \emph{synthesis}, summarized formally in Theorem~\ref{thm:conditional-law-spine}, and it goes past mere collection in five deliberate ways. (i) It states the \emph{connective derivations} that turn the list into a single object: the tower property is the semigroup law, taking expectations in It\^o's formula is the backward Kolmogorov equation, integration by parts is the forward operator, and the filtering and control Riccati equations are dual under the standard linear--quadratic--Gaussian transposition/time-reversal correspondence. (ii) It carries \emph{two contrasting worked examples} through every section --- the Gaussian Ornstein--Uhlenbeck (OU) process and the non-Gaussian, positivity-preserving Cox--Ingersoll--Ross (CIR) process --- so each abstract object acquires a closed form and the construction is visibly not Gaussian-specific. (iii) It recasts the forward evolution of densities as a \emph{Wasserstein gradient flow} of free energy, delivering on ``the forward evolution of laws'' geometrically rather than only analytically (\S\ref{sec:wasserstein}). (iv) It gives a \emph{discrete-time dictionary} for the whole spine, the form most forecasting and machine-learning practice actually uses (\S\ref{sec:discrete}). (v) It closes the loop to data: every object is estimated, and a forecast is only as good as the estimated generator (\S\ref{sec:estimation}). Throughout, results are stated in representative form; the hypotheses that matter are recorded in \S\ref{sec:synthesis}.

\section{Related work and what the synthesis adds}\label{sec:related}

The progression information $\to$ kernel $\to$ semigroup $\to$ generator $\to$ PDE is the organizing logic of the standard references on Markov processes \citep{EthierKurtz1986,RogersWilliams2000,RevuzYor1999}, and the semigroup-first development is the explicit program of \citet{Pazy1983} on the analytic side and \citet{Bobrowski2005} on the probabilistic side. The duality between propagating observables and propagating measures is classical \citep{EthierKurtz1986}; the curvature and functional-inequality view of the generator is developed in \citet{BGL2014}; the metric-geometry view of the forward equation originates with \citet{JKO1998} and \citet{Otto2001} and is systematized in \citet{AGS2008} and \citet{Villani2009}. The financial half rests on \citet{DelbaenSchachermayer2006}, \citet{KaratzasShreve1991}, \citet{Bjork2009}, and \citet{Duffie2001}; the operator/spectral view of long-horizon pricing is \citet{HansenScheinkman2009}.

Against this background the present synthesis adds four things that no single one of those sources foregrounds simultaneously. First, it keeps \emph{one spine} and shows that the transitions between its links are not analogies but identities (\S\ref{sec:semigroups}--\S\ref{sec:kolmogorov}, \S\ref{sec:control}). Second, it makes the spine \emph{concrete} with two closed-form threads, OU and CIR, so that the generator, both Kolmogorov equations, invariant law, spectral structure, filtering, term-structure pricing, control, and long-run behavior are instantiated in parallel. The OU thread gives the Gaussian/Hermite/Vasicek case; the CIR thread gives the positive/Gamma/Laguerre/Riccati case. Third, it places the \emph{analytic} forward equation (Fokker--Planck) and its \emph{geometric} form (Wasserstein gradient flow) side by side. Fourth, it supplies a \emph{discrete-time} column and an explicit \emph{estimation} caveat, which is where forecasting is done in practice. The intended contribution is therefore pedagogical and architectural: a compressed, self-consistent map rather than a new frontier theorem.

\section{Information: filtered probability spaces}\label{sec:info}

Fix a probability space $(\Omega,\mathcal{F},\PP)$ carrying a filtration $(\mathcal{F}_t)_{t\ge 0}$ satisfying the usual conditions: completeness and right-continuity. A stochastic process is a family $X=(X_t)_{t\ge 0}$ of random variables valued in a Polish state space $(E,\mathcal{E})$; throughout, unless stated otherwise, $E=\R^d$ with its Borel $\sigma$-algebra. The filtration is the formal carrier of time:
\[
\mathcal{F}_t=\text{``everything observable up to and including time }t\text{''},\qquad
\mathcal{F}_\infty=\sigma\Big(\textstyle\bigcup_{t\ge 0}\mathcal{F}_t\Big).
\]
The past at $t$ is $\mathcal{F}_t$; the future is the family $\{X_s:s>t\}$, whose law conditioned on $\mathcal{F}_t$ is the object to be modeled.

\begin{definition}[Adapted, progressive, optional, and predictable processes]
A process $X$ is \emph{adapted} to $(\mathcal{F}_t)_{t\ge0}$ if $X_t$ is $\mathcal{F}_t$-measurable for every $t$. It is \emph{progressively measurable} if, for every $T$, the restriction of $(\omega,t)\mapsto X_t(\omega)$ to $\Omega\times[0,T]$ is $\mathcal{F}_T\otimes\mathcal{B}([0,T])$-measurable. It is \emph{optional} if it is measurable with respect to the optional $\sigma$-algebra, and \emph{predictable} if it is measurable with respect to the $\sigma$-algebra generated by left-continuous adapted processes.
\end{definition}

Adaptedness says that present values use no future information. Predictability says that a quantity is known just before it acts. This distinction is central in stochastic integration: integrands are predictable, while the next martingale increment is not. In that sense, predictability separates the systematic part of a model from its innovation. For the measure-theoretic foundations see \citet{Kallenberg2021} and \citet{DellacherieMeyer1982}.

\begin{definition}[Natural and enlarged filtrations]
The natural filtration of $X$ is $\mathcal{F}^X_t=\sigma(X_s:0\le s\le t)$, usually augmented to satisfy the usual conditions. If $\mathcal{F}^X_t\subseteq\mathcal{G}_t$, then $(\mathcal{G}_t)$ is an enlarged filtration. The same process may have different forecasting properties under different filtrations.
\end{definition}

\begin{remark}[Information is part of the model]
A claim such as ``$X$ is unpredictable'' is never absolute. It means unpredictable \emph{relative to a filtration}. Enlarging $\mathcal{F}_t$ may transform a martingale into a process with drift, or a Markov process into a non-Markov process relative to the larger information set. A forecasting model is therefore not merely a law for $X$; it is a pair consisting of a law and an information structure.
\end{remark}

\section{Conditional laws and optimal prediction}\label{sec:conditional}

The full object of prediction is a conditional distribution. The most important scalar summary of that distribution is the conditional expectation.

\begin{definition}[Conditional expectation]
Let $Z\in L^1(\Omega,\mathcal{F},\PP)$ and let $\mathcal{G}\subseteq\mathcal{F}$ be a sub-$\sigma$-algebra. The conditional expectation $\E[Z\mid\mathcal{G}]$ is the almost surely unique $\mathcal{G}$-measurable integrable random variable satisfying
\begin{equation}\label{eq:condexp}
\int_G \E[Z\mid\mathcal{G}]\dd\PP=\int_G Z\dd\PP,\qquad G\in\mathcal{G}.
\end{equation}
\end{definition}

Existence and uniqueness follow from the Radon--Nikodym theorem \citep[Ch.~9]{Williams1991}. For square-integrable random variables, conditional expectation is a geometric projection.

\begin{theorem}[Conditional expectation is the best mean-square predictor]\label{thm:projection}
If $Z\in L^2(\Omega,\mathcal{F},\PP)$, then
\[
\E[Z\mid\mathcal{G}]=\argmin_{Y\in L^2(\Omega,\mathcal{G},\PP)}\E\big[(Z-Y)^2\big].
\]
Equivalently, the residual $Z-\E[Z\mid\mathcal{G}]$ is orthogonal to every $\mathcal{G}$-measurable $Y\in L^2$: $\E[(Z-\E[Z\mid\mathcal{G}])\,Y]=0$.
\end{theorem}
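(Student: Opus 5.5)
The plan is to establish the orthogonality characterization first and then derive the minimization property from it by a Pythagorean decomposition. Write $\hat Z=\E[Z\mid\mathcal{G}]$. Before anything else, I will check that $\hat Z\in L^2(\Omega,\mathcal{G},\PP)$, so that it is an admissible competitor in the argmin. The natural route is conditional Jensen, $\hat Z^2\le \E[Z^2\mid\mathcal{G}]$ a.s., followed by taking expectations to get $\E[\hat Z^2]\le\E[Z^2]<\infty$.

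\textbf{Orthogonality.} Fix $Y\in L^2(\Omega,\mathcal{G},\PP)$. The goal is $\E[(Z-\hat Z)Y]=0$, equivalently $\E[ZY]=\E[\hat Z Y]$.
\begin{itemize}
\item For indicators $Y=\mathbf 1_G$ with $G\in\mathcal{G}$, this is exactly the defining identity \eqref{eq:condexp}.
\item By linearity it extends to simple $\mathcal{G}$-measurable $Y$.
\item For general $Y\in L^2(\mathcal{G})$, choose simple $\mathcal{G}$-measurable $Y_n\to Y$ in $L^2$, for instance truncated dyadic approximations, which are dominated by $|Y|$.
\item Cauchy--Schwarz gives $|\E[Z(Y_n-Y)]|\le\norm{Z}_2\norm{Y_n-Y}_2\to0$, and the same bound holds with $\hat Z$ in place of $Z$, using the $L^2$ bound on $\hat Z$ from the first step.
\end{itemize}
Passing to the limit gives the identity for every $Y\in L^2(\mathcal{G})$.

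\textbf{Minimization.} For any $Y\in L^2(\mathcal{G})$, expand
\[
\E[(Z-Y)^2]=\E[(Z-\hat Z)^2]+2\,\E[(Z-\hat Z)(\hat Z-Y)]+\E[(\hat Z-Y)^2].
\]
The cross term vanishes by orthogonality applied to the $\mathcal{G}$-measurable, square-integrable variable $\hat Z-Y$. Hence $\E[(Z-Y)^2]\ge\E[(Z-\hat Z)^2]$, with equality iff $Y=\hat Z$ a.s. This shows the argmin exists and is unique as an element of $L^2$, and that it equals $\hat Z$.

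\textbf{Equivalence.} The statement says "equivalently", so I also need the converse: any $\mathcal{G}$-measurable $W\in L^2$ with $Z-W\perp L^2(\mathcal{G})$ must equal $\hat Z$. Testing against $Y=\mathbf 1_G$ shows $W$ satisfies \eqref{eq:condexp}, so $W=\hat Z$ a.s. by uniqueness of conditional expectation. Alternatively, one can note that $L^2(\mathcal{G})$ is a closed subspace, being complete, and invoke the Hilbert projection theorem. In that framing the orthogonal projection is characterized exactly by the orthogonality condition, and the argument above identifies it with $\hat Z$.

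The main obstacle is mild. It is the extension of the defining property from indicators to all of $L^2(\mathcal{G})$, which relies on having $\hat Z\in L^2$ from the first step and on approximating $Y$ by simple functions. Everything after that is algebraic.
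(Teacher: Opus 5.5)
Your proof is correct. It uses the same core link as the paper: orthogonality against $L^2(\Omega,\mathcal{G},\PP)$ is the identity \eqref{eq:condexp} tested on indicators. However, it runs the argument in the opposite direction.

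The paper starts from the Hilbert projection theorem. Since $L^2(\Omega,\mathcal{G},\PP)$ is a closed subspace, a unique minimizer exists and is characterized by orthogonality. Testing that orthogonality on $\mathbf{1}_G$ shows the projection satisfies \eqref{eq:condexp}, and uniqueness of conditional expectation identifies it with $\E[Z\mid\mathcal{G}]$. In that direction, restriction to indicators is all that is needed, and square integrability of the candidate is automatic. As a by-product, one also gets an existence proof of $\E[Z\mid\mathcal{G}]$ for $Z\in L^2$ that bypasses Radon--Nikodym.

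You instead take $\hat Z$ as already constructed, so you must pay for two things the paper's route gets for free:
\begin{itemize}
\item showing $\hat Z\in L^2$, which is your conditional Jensen step;
\item genuinely extending \eqref{eq:condexp} from indicators to all of $L^2(\mathcal{G})$, which is your simple-function approximation with Cauchy--Schwarz, and exactly where $\hat Z\in L^2$ is used.
\end{itemize}
In exchange, your Pythagorean expansion proves minimality and uniqueness of the minimizer directly. It needs no appeal to completeness of $L^2(\mathcal{G})$ or to the projection theorem, so the argument is entirely elementary. Your converse step, via uniqueness in \eqref{eq:condexp}, also makes the ``equivalently'' clause explicit rather than inheriting it from the projection theorem's characterization of the projection by orthogonality.
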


\begin{proof}[Proof sketch]
The space $L^2(\Omega,\mathcal{G},\PP)$ is a closed linear subspace of $L^2(\Omega,\mathcal{F},\PP)$. Orthogonal projection onto this subspace exists and is unique. The defining identity of the projection is precisely the conditional-expectation identity \eqref{eq:condexp}, tested against indicators and extended by density.
\end{proof}

Among all forecasts allowed to use the information $\mathcal{G}$, the conditional expectation is optimal under squared loss. Other scoring rules produce other summaries, but conditional expectation is the canonical Hilbert-space forecast.

\begin{proposition}[Tower property and variance decomposition]\label{prop:tower}
For sub-$\sigma$-algebras $\mathcal{H}\subseteq\mathcal{G}\subseteq\mathcal{F}$ and $Z\in L^1$,
\[
\E\big[\E[Z\mid\mathcal{G}]\,\big|\,\mathcal{H}\big]=\E[Z\mid\mathcal{H}].
\]
If $Z\in L^2$, then $\Var(Z)=\E[\Var(Z\mid\mathcal{G})]+\Var\big(\E[Z\mid\mathcal{G}]\big)$.
\end{proposition}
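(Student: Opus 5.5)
The plan is to prove the tower property directly from the defining identity \eqref{eq:condexp} and its uniqueness clause. Set $W:=\E[Z\mid\mathcal{G}]$; it is integrable and $\mathcal{G}$-measurable. The candidate for $\E[W\mid\mathcal{H}]$ is $V:=\E[Z\mid\mathcal{H}]$, which is $\mathcal{H}$-measurable and integrable. It then suffices to check, for every $H\in\mathcal{H}$,
\[
\int_H V\dd\PP=\int_H Z\dd\PP=\int_H W\dd\PP .
\]
The first equality is the defining property of $V$. The second is the defining property of $W$ applied to $H$, which is legitimate because $H\in\mathcal{H}\subseteq\mathcal{G}$. The almost-sure uniqueness in the definition of conditional expectation then gives $\E[W\mid\mathcal{H}]=V$. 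This is the only place the inclusion $\mathcal{H}\subseteq\mathcal{G}$ enters.

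For the variance decomposition, take $Z\in L^2$ and again $W=\E[Z\mid\mathcal{G}]$. By conditional Jensen, $W\in L^2$. Write $\Var(Z\mid\mathcal{G}):=\E[(Z-W)^2\mid\mathcal{G}]$, and split
\[
Z-\E Z=(Z-W)+(W-\E Z).
\]
Squaring and taking expectations produces three terms.
\begin{itemize}
\item The first is $\E[(Z-W)^2]$. By the tower property with $\mathcal{H}$ trivial (or directly from \eqref{eq:condexp} with $G=\Omega$), this equals $\E[\Var(Z\mid\mathcal{G})]$.
\item The second is $\E[(W-\E Z)^2]$. It equals $\Var(W)$ because $\E W=\E Z$, again by the tower property with the trivial $\sigma$-algebra.
\item The cross term $2\,\E[(Z-W)(W-\E Z)]$ vanishes by the orthogonality statement of Theorem~\ref{thm:projection}, since $W-\E Z$ is a $\mathcal{G}$-measurable element of $L^2$.
\end{itemize}
Adding the three terms yields $\Var(Z)=\E[\Var(Z\mid\mathcal{G})]+\Var(\E[Z\mid\mathcal{G}])$.

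Neither step is a real obstacle. The one point needing care is integrability bookkeeping. In the first part, uniqueness requires the candidate to be integrable; this holds by definition. In the second part, applying the orthogonality of Theorem~\ref{thm:projection} requires both $Z-W$ and $W-\E Z$ to lie in $L^2$. That in turn requires $W\in L^2$, which I obtain from conditional Jensen, or equivalently from the identification of $W$ with the $L^2$ projection in Theorem~\ref{thm:projection}.
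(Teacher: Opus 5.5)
Your argument is correct and complete. The paper does not actually prove Proposition~\ref{prop:tower}: it states it as a standard fact and moves on, so there is no competing route to compare against. What you give is the standard textbook proof.

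It also fits the paper's own framing well. For the tower property you work from the defining identity \eqref{eq:condexp} and a.s.\ uniqueness, which handles the stated hypothesis $Z\in L^1$ directly. The Hilbert-space picture the paper emphasizes in Theorem~\ref{thm:projection} would give the tower property differently: projecting onto $L^2(\Omega,\mathcal{G},\PP)$ and then onto the smaller closed subspace $L^2(\Omega,\mathcal{H},\PP)$ is the same as projecting directly onto the latter. That route only covers $Z\in L^2$, and reaching $L^1$ would need an extra density and $L^1$-contraction step.

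Your variance decomposition is exactly Pythagoras in that geometry. You kill the cross term with the orthogonality clause of Theorem~\ref{thm:projection}, rather than by pulling the $\mathcal{G}$-measurable factor $W-\E Z$ out of a conditional expectation. This spares you from justifying that pull-out for an unbounded $L^2$ factor. Your integrability bookkeeping ($W\in L^2$ by conditional Jensen, hence $(Z-W)^2$ and the cross product are integrable) is precisely what each step needs.
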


The tower property says that a coarser forecast equals the conditional average of a finer one; it is the consistency condition that the whole dynamical theory will inherit (\S\ref{sec:semigroups}). The variance decomposition splits total uncertainty into what is resolved by $\mathcal{G}$ and what remains. Operationally, modeling the future means choosing and updating an information set that resolves useful uncertainty without introducing inaccessible information.

\begin{definition}[Regular conditional law]
Let $X$ be a random element of a Polish space $E$. A regular conditional law of $X$ given $\mathcal{G}$ is a probability kernel $K:\Omega\times\mathcal{E}\to[0,1]$ such that $K(\cdot,A)$ is $\mathcal{G}$-measurable for every $A\in\mathcal{E}$ and $K(\omega,A)=\PP(X\in A\mid\mathcal{G})(\omega)$ a.s.
\end{definition}

For Polish state spaces, regular conditional laws exist. They upgrade point forecasting to distributional forecasting: $\E[f(X)\mid\mathcal{G}](\omega)=\int_E f(x)\,K(\omega,\dd x)$ for bounded measurable $f$. The full future forecast is thus not a number but a random probability measure.

\section{The Markov property: when the present screens the past}\label{sec:markov}

In general, the conditional law of the future depends on the entire past. The Markov property is the assumption that makes the problem tractable: the present state contains all predictively relevant information. This is relative to the chosen filtration; a process can be Markov with respect to its natural filtration but fail to be Markov with respect to a larger one carrying additional predictive information.

\begin{definition}[Markov process]
An adapted process $X$ is Markov with respect to $(\mathcal{F}_t)_{t\ge0}$ if, for every bounded measurable $f$ and all $s,t\ge0$,
\begin{equation}\label{eq:markov}
\E\big[f(X_{t+s})\mid\mathcal{F}_t\big]=\E\big[f(X_{t+s})\mid X_t\big]=(P_sf)(X_t)\quad\text{a.s.},
\end{equation}
where $(P_s)_{s\ge0}$ is induced by transition kernels $p_s(x,\dd y)$ through $(P_sf)(x)=\int_E f(y)\,p_s(x,\dd y)$. The kernels are consistent through the Chapman--Kolmogorov equation,
\begin{equation}\label{eq:ck}
p_{t+s}(x,A)=\int_E p_s(y,A)\,p_t(x,\dd y),\qquad A\in\mathcal{E}.
\end{equation}
\end{definition}

To reach $A$ in time $t+s$, first propagate to an intermediate state $y$ in time $t$, then from $y$ to $A$ in time $s$, and integrate over all intermediate states. Equation \eqref{eq:markov} reduces ``the future given the past'' to ``the future given the present,'' the conceptual core of forward modeling \citep{EthierKurtz1986,RevuzYor1999}.

\begin{example}[Autoregression and state augmentation]
A scalar autoregression of order $k$ is not Markov in the scalar observation $X_t$ alone, but it becomes Markov after augmenting the state to $(X_t,X_{t-1},\dots,X_{t-k+1})$. This is the general modeling lesson: non-Markovian dynamics can often be represented as Markovian dynamics on a richer state space. The cost is dimensionality; the benefit is semigroup structure.
\end{example}

\begin{outhread}[the process and its Markov property]\label{ou:def}
Throughout, the worked example is the scalar Ornstein--Uhlenbeck process
\begin{equation}\label{eq:ousde}
\dd X_t=\theta(\mu-X_t)\dd t+\sigma\dd W_t,\qquad \theta>0,\ \sigma>0,
\end{equation}
with mean-reversion rate $\theta$, long-run level $\mu$, and volatility $\sigma$. It is a time-homogeneous Feller--Markov process with respect to the augmented natural filtration of $W$; everything that follows is an explicit instance of the corresponding abstract object.
\end{outhread}

\begin{cirthread}[a non-Gaussian, positivity-preserving counterpart]\label{cir:def}
The second worked example is the scalar Cox--Ingersoll--Ross process
\begin{equation}\label{eq:cirsde}
\dd X_t=\theta(\mu-X_t)\dd t+\sigma\sqrt{X_t}\,\dd W_t,\qquad \theta,\mu,\sigma>0,
\end{equation}
with the same mean-reversion rate, long-run level, and volatility parameters, but with \emph{state-dependent} (and degenerate) diffusion $\sigma^2 x$. On the state space $(0,\infty)$ it is a time-homogeneous Feller--Markov process \citep{CIR1985,Feller1951}. It shares the OU mean-reversion drift, so the two threads can be compared object by object, but its non-additive noise makes the transition law non-Gaussian and confines the process to the positive half-line --- the feature that makes it the standard model of a strictly positive quantity such as an interest rate or a variance.
\end{cirthread}

\bridge{The tower property of Proposition~\ref{prop:tower} \emph{is} the semigroup law. For a Markov process,
\[
(P_{t+s}f)(x)=\E_x[f(X_{t+s})]=\E_x\!\big[\E[f(X_{t+s})\mid\mathcal{F}_t]\big]=\E_x\!\big[(P_sf)(X_t)\big]=(P_t P_s f)(x),
\]
where the second equality is the tower property and the third is the Markov property \eqref{eq:markov}. Thus $P_{t+s}=P_tP_s$ below is not a new axiom but Proposition~\ref{prop:tower} specialized to the Markov filtration; equivalently, integrating the kernels reproduces Chapman--Kolmogorov \eqref{eq:ck}.}

\section{Semigroups: propagating observables and laws}\label{sec:semigroups}

Reading \eqref{eq:ck} at the level of operators turns transition kernels into a one-parameter semigroup acting on functions.

\begin{definition}[Transition semigroup]
The operators $(P_t)_{t\ge0}$ defined by $(P_tf)(x)=\E_x[f(X_t)]=\E[f(X_t)\mid X_0=x]$ form a semigroup:
\begin{equation}\label{eq:semigroup}
P_0=\Id,\qquad P_{t+s}=P_tP_s,\qquad s,t\ge0.
\end{equation}
On a suitable Banach space of functions, for example $C_0(E)$ for a Feller process, the semigroup is strongly continuous if $\norm{P_tf-f}\to0$ as $t\downarrow0$ for every $f$ in that space.
\end{definition}

The same kernel acts dually on probability measures. If $\mu$ is the law of $X_0$, then the law of $X_t$ is
\begin{equation}\label{eq:fwdmeasure}
\mu P_t(A)=\int_E p_t(x,A)\,\mu(\dd x),
\end{equation}
and duality gives
\begin{equation}\label{eq:duality}
\int_E f(y)\,(\mu P_t)(\dd y)=\int_E (P_tf)(x)\,\mu(\dd x).
\end{equation}
Thus $P_t$ propagates observables backward from the horizon to the present, while its adjoint propagates distributions forward from the present to the horizon. This duality is the operator-theoretic form of forecasting.

\begin{remark}[Observable view versus distribution view]
There are two equivalent ways to model the future. The first asks: given today's state $x$, what is the expected future value of every observable $f$? This is $P_tf(x)$. The second asks: given today's distribution $\mu$, what is tomorrow's distribution? This is $\mu P_t$. The backward Kolmogorov equation governs the first view; the forward equation governs the second.
\end{remark}

\begin{outhread}[transition law and spectral structure]\label{ou:semigroup}
Solving \eqref{eq:ousde} explicitly,
\begin{equation}\label{eq:ousol}
X_t=\mu+(X_0-\mu)\e^{-\theta t}+\sigma\!\int_0^t \e^{-\theta(t-s)}\dd W_s,
\end{equation}
so the transition law is Gaussian,
\[
X_t\mid X_0=x\ \sim\ \mathcal{N}\!\Big(m(t,x),\,v(t)\Big),\qquad
m(t,x)=\mu+(x-\mu)\e^{-\theta t},\quad v(t)=\frac{\sigma^2}{2\theta}\big(1-\e^{-2\theta t}\big).
\]
In particular $(P_t f)(x)=\int_\R f(y)\,\mathcal{N}\big(y;m(t,x),v(t)\big)\dd y$ and $(P_t\,\mathrm{id})(x)=\mu+(x-\mu)\e^{-\theta t}$. The OU semigroup has \emph{discrete spectrum}: the Hermite polynomials $\{H_n\}$ orthonormal for the invariant Gaussian are eigenfunctions,
\begin{equation}\label{eq:mehler}
P_t H_n=\e^{-n\theta t}H_n,\qquad n=0,1,2,\dots,
\end{equation}
which is Mehler's formula in spectral form \citep{BGL2014}. The decay rates $n\theta$ are exactly the relaxation times the rest of the thread will reproduce.
\end{outhread}

\begin{cirthread}[transition law and spectral structure]\label{cir:semigroup}
For CIR the conditional mean coincides with OU, $\E[X_t\mid X_0=x]=\mu+(x-\mu)\e^{-\theta t}$, but the conditional variance is state-dependent,
\[
\Var(X_t\mid X_0=x)=x\,\frac{\sigma^2}{\theta}\big(\e^{-\theta t}-\e^{-2\theta t}\big)+\mu\,\frac{\sigma^2}{2\theta}\big(1-\e^{-\theta t}\big)^2,
\]
and the transition law is a (scaled) \emph{noncentral chi-square}: with $c_t=2\theta/[\sigma^2(1-\e^{-\theta t})]$, the variable $2c_t X_t$ given $X_0=x$ is noncentral $\chi^2$ with $4\theta\mu/\sigma^2$ degrees of freedom and noncentrality $2c_t x\e^{-\theta t}$ \citep{CIR1985}. The semigroup again has discrete spectrum, but with the orthogonal family of the \emph{Gamma} invariant law: the generalized Laguerre polynomials $\{L_n^{(\alpha-1)}\}$, $\alpha=2\theta\mu/\sigma^2$, are eigenfunctions,
\begin{equation}\label{eq:laguerre}
P_t L_n^{(\alpha-1)}=\e^{-n\theta t}L_n^{(\alpha-1)},\qquad n=0,1,2,\dots,
\end{equation}
so the relaxation rates $n\theta$ match OU exactly while the eigenfunctions are Laguerre rather than Hermite \citep{KarlinTaylor1981,BGL2014}. OU and CIR are the Hermite and Laguerre members of the classical polynomial-eigenfunction diffusions.
\end{cirthread}

\section{Generators: the infinitesimal future}\label{sec:generators}

A strongly continuous semigroup is determined by its behavior at the origin.

\begin{definition}[Infinitesimal generator]
The generator of $(P_t)$ is the operator
\begin{equation}\label{eq:generator}
\mathcal{A}f=\lim_{t\downarrow0}\frac{P_tf-f}{t},
\end{equation}
defined on the domain $\mathcal{D}(\mathcal{A})$ of all $f$ for which the limit exists in the Banach-space norm.
\end{definition}

\begin{theorem}[Hille--Yosida; semigroup--generator correspondence]\label{thm:hilleyosida}
Let $\mathcal{A}$ be a densely defined closed operator on a Banach space $\mathcal{B}$. Then $\mathcal{A}$ generates a strongly continuous contraction semigroup $(P_t)_{t\ge0}$ if and only if $(0,\infty)\subseteq\rho(\mathcal{A})$ and
\[
\norm{(\lambda I-\mathcal{A})^{-n}}\le \lambda^{-n},
\qquad \lambda>0,\quad n\ge1.
\]
In that case, for every $f\in\mathcal{D}(\mathcal{A})$,
\begin{equation}\label{eq:gencommute}
\frac{\dd}{\dd t}P_tf=\mathcal{A}P_tf=P_t\mathcal{A}f.
\end{equation}
\end{theorem}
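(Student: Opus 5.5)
The plan is the classical two-direction argument, together with a separate verification of the derivative identity \eqref{eq:gencommute}.

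\emph{Necessity and the derivative identity.} Suppose first that $\mathcal{A}$ generates a strongly continuous contraction semigroup $(P_t)$. For $f\in\mathcal{D}(\mathcal{A})$ I would compute the right difference quotient $h^{-1}(P_{t+h}f-P_tf)=P_t\,h^{-1}(P_hf-f)$. Boundedness of $P_t$ sends this to $P_t\mathcal{A}f$. Writing it instead as $h^{-1}(P_h-\Id)P_tf$ shows that $P_tf\in\mathcal{D}(\mathcal{A})$ and that the limit equals $\mathcal{A}P_tf$. For the left quotient I would use $P_{t-h}\,h^{-1}(P_hf-f)-P_t\mathcal{A}f$ and control it by the uniform bound $\norm{P_{t-h}}\le1$ together with strong continuity. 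This gives \eqref{eq:gencommute}.

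Still assuming $\mathcal{A}$ is a generator, for $\lambda>0$ I would define the Laplace transform $R_\lambda f=\int_0^\infty \e^{-\lambda t}P_tf\dd t$. It is a Bochner integral, and the contraction property gives $\norm{R_\lambda}\le\lambda^{-1}$. Next I would check that $(\lambda I-\mathcal{A})R_\lambda f=f$ for all $f$, by applying the difference quotient to the integral, and that $R_\lambda(\lambda I-\mathcal{A})f=f$ on $\mathcal{D}(\mathcal{A})$, using \eqref{eq:gencommute}. Hence $\lambda\in\rho(\mathcal{A})$ and $R_\lambda=(\lambda I-\mathcal{A})^{-1}$. The bound $\norm{R_\lambda^{n}}\le\lambda^{-n}$ then follows by submultiplicativity. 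Along the way I would record that $\mathcal{D}(\mathcal{A})$ is dense, since $t^{-1}\int_0^tP_sf\dd s\in\mathcal{D}(\mathcal{A})$ converges to $f$, and that $\mathcal{A}$ is closed. This is consistent with the hypotheses of the statement.

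\emph{Sufficiency via Yosida approximation.} Conversely, assume the resolvent bounds hold. I would form the bounded operators $\mathcal{A}_\lambda=\lambda\mathcal{A}(\lambda I-\mathcal{A})^{-1}=\lambda^2(\lambda I-\mathcal{A})^{-1}-\lambda I$. The steps are as follows.
\begin{enumerate}
\item Show that $\lambda(\lambda I-\mathcal{A})^{-1}f\to f$ as $\lambda\to\infty$. This holds first for $f\in\mathcal{D}(\mathcal{A})$, because $\lambda(\lambda I-\mathcal{A})^{-1}f-f=(\lambda I-\mathcal{A})^{-1}\mathcal{A}f$ has norm at most $\lambda^{-1}\norm{\mathcal{A}f}$. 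It extends to all $f$ by density and the uniform bound. Consequently $\mathcal{A}_\lambda f\to\mathcal{A}f$ on $\mathcal{D}(\mathcal{A})$.
\item Define $P^\lambda_t=\exp(t\mathcal{A}_\lambda)=\e^{-\lambda t}\exp\!\big(t\lambda^2(\lambda I-\mathcal{A})^{-1}\big)$. The power-series estimate gives $\norm{P^\lambda_t}\le\e^{-\lambda t}\e^{\lambda t}=1$, so these are contractions.
\item Since the $\mathcal{A}_\lambda$ commute, use the identity $P^\lambda_tf-P^\mu_tf=\int_0^t P^\lambda_{t-s}P^\mu_s(\mathcal{A}_\lambda-\mathcal{A}_\mu)f\dd s$. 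This yields $\norm{P^\lambda_tf-P^\mu_tf}\le t\norm{\mathcal{A}_\lambda f-\mathcal{A}_\mu f}$, so $P^\lambda_tf$ is Cauchy, uniformly on compact $t$-intervals, for $f\in\mathcal{D}(\mathcal{A})$. By density it is Cauchy for all $f$.
\item Let $P_t$ be the limit. The semigroup law, the contraction bound and strong continuity all pass to the limit.
\end{enumerate}

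\emph{Identifying the generator.} Finally I would show that the generator $\mathcal{B}$ of the limit semigroup equals $\mathcal{A}$. I expect this to be the main obstacle. First, pass to the limit in $P^\lambda_tf-f=\int_0^tP^\lambda_s\mathcal{A}_\lambda f\dd s$ to obtain $P_tf-f=\int_0^tP_s\mathcal{A}f\dd s$, which shows $\mathcal{A}\subseteq\mathcal{B}$. Second, for $\lambda>0$ both $\lambda I-\mathcal{A}$ and $\lambda I-\mathcal{B}$ are bijections onto $\mathcal{B}$, the first by hypothesis and the second by the necessity part. An operator that is injective on the larger domain and already surjective on the smaller one forces $\mathcal{D}(\mathcal{B})=\mathcal{D}(\mathcal{A})$. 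Uniqueness of the semigroup generated by $\mathcal{A}$ follows from the same argument: for two semigroups with generator $\mathcal{A}$, differentiate $s\mapsto P_{t-s}Q_sf$ using \eqref{eq:gencommute}, obtain zero, and conclude that they coincide. Since the paper states the result in representative form, I would present the estimates in steps 2 and 3 in detail and cite \citet{Pazy1983} for the routine Bochner-integral manipulations.
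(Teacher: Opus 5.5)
Your proposal is correct. It is the standard proof: the Laplace-transform resolvent $\int_0^\infty \e^{-\lambda t}P_t\dd t$ for necessity, the Yosida approximants $\mathcal{A}_\lambda=\lambda^2(\lambda I-\mathcal{A})^{-1}-\lambda I$ with the commuting Duhamel estimate for sufficiency, and the injectivity-versus-surjectivity argument giving equal domains. The paper gives no proof of its own and defers to \citet{Pazy1983} and \citet{EthierKurtz1986}, which argue along this same route, so there is nothing further to compare; just rename the limit generator, since $\mathcal{B}$ already denotes the Banach space.
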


Formally one writes $P_t=\e^{t\mathcal{A}}$; this notation is literal when $\mathcal{A}$ is bounded and is understood in the semigroup sense otherwise. See \citet{Pazy1983} for the analytic theory and \citet{EthierKurtz1986} for the probabilistic formulation. Equation \eqref{eq:gencommute} says that the generator commutes with the flow it produces: on $\mathcal{D}(\mathcal{A})$ the square
\[
\begin{array}{ccc}
f & \xrightarrow{\ P_t\ } & P_tf\\[2pt]
{\scriptstyle\mathcal{A}}\big\downarrow & & \big\downarrow{\scriptstyle\mathcal{A}}\\[2pt]
\mathcal{A}f & \xrightarrow{\ P_t\ } & P_t\mathcal{A}f
\end{array}
\]
commutes. The generator is the local law of motion; the semigroup is its time-integrated law. Everything operational about how Markovian futures evolve is encoded in $(\mathcal{A},\mathcal{D}(\mathcal{A}))$.

\begin{remark}[Why the domain matters]
The pair $(\mathcal{A},\mathcal{D}(\mathcal{A}))$ matters more than the formal expression for $\mathcal{A}$. Different boundary conditions can give the same differential expression but different domains, hence different processes: absorbing, reflecting, and killed diffusions are distinguished by their generator domains. A model of the future is incomplete until the domain and boundary behavior are specified.
\end{remark}

\begin{remark}[Local and nonlocal generators]
For a diffusion the generator is a second-order \emph{local} differential operator \eqref{eq:diffgen}. This is not the only possibility: a jump or L\'evy-type process has a generator with an additional integral term $\int\big(f(x+z)-f(x)-\mathbf{1}_{\{|z|\le1\}}z\cdot\nabla f(x)\big)\nu(x,\dd z)$, so ``local law of motion'' must be read as ``local plus jump'' in general \citep{Applebaum2009}. The diffusion case is the one carried by the OU thread.
\end{remark}

\begin{outhread}[generator and spectrum]\label{ou:gen}
For \eqref{eq:ousde} the generator on $C_c^2(\R)$ is
\[
\mathcal{A}f(x)=\theta(\mu-x)f'(x)+\tfrac{1}{2}\sigma^2 f''(x).
\]
Its Hermite eigenfunctions satisfy $\mathcal{A}H_n=-n\theta\,H_n$, consistent with \eqref{eq:mehler} via $P_t=\e^{t\mathcal{A}}$. The spectral gap $\theta$ (the smallest nonzero relaxation rate) will reappear as the convergence rate of the law (\S\ref{sec:wasserstein}) and of time averages (\S\ref{sec:farfuture}).
\end{outhread}

\begin{cirthread}[generator, and the boundary remark made concrete]\label{cir:gen}
For \eqref{eq:cirsde} the generator on a suitable core in $C^2((0,\infty))$ is
\[
\mathcal{A}f(x)=\theta(\mu-x)f'(x)+\tfrac{1}{2}\sigma^2 x\,f''(x),
\]
with the same Laguerre spectrum $\mathcal{A}L_n^{(\alpha-1)}=-n\theta L_n^{(\alpha-1)}$. Because the diffusion coefficient $a(x)=\sigma^2 x$ \emph{degenerates at the boundary} $x=0$, the abstract domain-and-boundary discussion above becomes an explicit dichotomy: the Feller condition $2\theta\mu\ge\sigma^2$ makes $0$ inaccessible and the process stays strictly positive, whereas if $2\theta\mu<\sigma^2$ the boundary is attainable, and the precise Markov process is determined by the Feller boundary classification and the generator domain \citep{Feller1951}. Alternative boundary prescriptions correspond to different extensions of the same formal differential operator. This is the boundary remark of \S\ref{sec:generators} realized: the differential expression alone does not determine the process; the boundary classification does.
\end{cirthread}

\section{From dynamics to PDEs: Kolmogorov equations}\label{sec:kolmogorov}

Specializing \eqref{eq:gencommute} to $u(t,x)=(P_tf)(x)=\E_x[f(X_t)]$ yields the equation governing conditional expectations of future observables.

\begin{theorem}[Backward Kolmogorov equation]\label{thm:backward}
Let $(P_t)$ be a strongly continuous Markov semigroup with generator $\mathcal{A}$. If $f\in\mathcal{D}(\mathcal{A})$ and the required regularity holds, then $u(t,x)=P_tf(x)$ solves
\begin{equation}\label{eq:backward}
\partial_t u(t,x)=(\mathcal{A}u)(t,x),\qquad u(0,x)=f(x).
\end{equation}
\end{theorem}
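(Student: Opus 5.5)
The proof specializes the semigroup--generator identity \eqref{eq:gencommute} of Theorem~\ref{thm:hilleyosida} to the function $u(t,\cdot)=P_tf$. The plan has three steps. First, show that $t\mapsto P_tf$ is differentiable in the Banach norm with derivative $\mathcal{A}P_tf$. Second, check that $P_tf$ stays in $\mathcal{D}(\mathcal{A})$, so that the right-hand side $(\mathcal{A}u)(t,\cdot)$ makes sense. Third, translate the Banach-space statement into a pointwise PDE statement, which is where the phrase ``the required regularity holds'' enters.

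\emph{Differentiability and invariance of the domain.} Fix $f\in\mathcal{D}(\mathcal{A})$ and $t\ge0$. The right difference quotient is
\[
\frac{P_{t+h}f-P_tf}{h}=P_t\Big(\frac{P_hf-f}{h}\Big),\qquad h\downarrow0,
\]
using the semigroup law \eqref{eq:semigroup}. Since $P_t$ is a bounded (contraction) operator, it passes through the limit, and the quotient converges to $P_t\mathcal{A}f$. Writing the same quotient as $(P_h-\Id)(P_tf)/h$ shows that this limit is also the defining limit \eqref{eq:generator} of $\mathcal{A}$ applied to $P_tf$. Hence $P_tf\in\mathcal{D}(\mathcal{A})$ and $\mathcal{A}P_tf=P_t\mathcal{A}f$.

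The left derivative, for $t>0$, is the only delicate point in the abstract part. I would write
\[
\frac{P_tf-P_{t-h}f}{h}-P_t\mathcal{A}f
=P_{t-h}\Big(\frac{P_hf-f}{h}-\mathcal{A}f\Big)+\big(P_{t-h}\mathcal{A}f-P_t\mathcal{A}f\big).
\]
The first term tends to zero by the uniform bound $\norm{P_{t-h}}\le1$. The second tends to zero by strong continuity of $s\mapsto P_s g$ at $s=t$, which follows from strong continuity at $0$ together with the semigroup property. So $u$ is $C^1$ in $t$ with values in $\mathcal{B}$, $\partial_tu=\mathcal{A}u$, and $u(0)=P_0f=f$. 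This is exactly \eqref{eq:gencommute}, so one may also simply cite Theorem~\ref{thm:hilleyosida}.

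\emph{From Banach-space to pointwise/classical form.} This is the step I expect to be the real obstacle. The hypothesis ``required regularity'' must mean that $u(t,\cdot)$ lies in a class on which the abstract $\mathcal{A}$ acts as the differential expression \eqref{eq:diffgen}, for instance $C^2$ functions in the core. When $\mathcal{B}=C_0(E)$, norm convergence implies pointwise convergence, so $\partial_tu(t,x)$ exists for every $x$ and equals $(\mathcal{A}u)(t,x)$. It then remains to identify the abstract $\mathcal{A}$ on $P_tf$ with the differential operator, and I would assume this as the regularity hypothesis; for OU and CIR it can be verified directly from the explicit transition densities. As an independent check, and as the probabilistic route the paper advertises, Itô's formula applied to $s\mapsto u(t-s,X_s)$ gives a local martingale. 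Taking expectations then recovers \eqref{eq:backward} whenever $u\in C^{1,2}$ with enough growth control to make the stochastic integral a true martingale.
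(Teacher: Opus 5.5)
Your proof is correct and follows the paper's route. The paper obtains the theorem by specializing \eqref{eq:gencommute} to $u=P_tf$ through the same right difference quotient $(P_{s+h}f-P_sf)/h=P_s\big((P_hf-f)/h\big)\to P_s\mathcal{A}f=\mathcal{A}P_sf$ (proof of Theorem~\ref{thm:conditional-law-spine}), with the It\^o/Dynkin identity $P_tf-f=\int_0^tP_s\mathcal{A}f\,\dd s$ in its Bridge as the probabilistic version. Your left-derivative and domain-invariance steps are more careful than the paper's, and one phrase in your It\^o side-check should be reordered: the martingale property of $u(t-s,X_s)=\E[f(X_t)\mid\mathcal{F}_s]$ comes from the Markov property, and It\^o's formula then forces the drift $(\mathcal{A}u-\partial_tu)(t-s,X_s)$ to vanish, whereas It\^o alone yields a local martingale only if the PDE is already known.
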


For a diffusion solving
\begin{equation}\label{eq:diffsde}
\dd X_t=b(X_t)\dd t+\sigma(X_t)\dd W_t,
\end{equation}
with $W$ an $m$-dimensional Brownian motion, $b:\R^d\to\R^d$, and $\sigma:\R^d\to\R^{d\times m}$, the diffusion matrix is $a(x)=\sigma(x)\sigma(x)^\top$ and the generator is the second-order diffusion operator, elliptic when $a(x)$ is nondegenerate,
\begin{equation}\label{eq:diffgen}
\mathcal{A}f(x)=\sum_{i=1}^d b_i(x)\partial_i f(x)+\tfrac{1}{2}\sum_{i,j=1}^d a_{ij}(x)\partial^2_{ij}f(x).
\end{equation}
Its formal $L^2$-adjoint $\mathcal{A}^*$ governs the evolution of densities.

\begin{theorem}[Forward Kolmogorov / Fokker--Planck equation]\label{thm:forward}
Assume $X_t$ admits a density $p(t,\cdot)$ and that the coefficients and density are sufficiently regular. Then
\begin{equation}\label{eq:forward}
\partial_t p(t,y)=(\mathcal{A}^* p)(t,y)=-\sum_{i=1}^d\partial_i\big(b_i(y)p(t,y)\big)+\tfrac{1}{2}\sum_{i,j=1}^d\partial^2_{ij}\big(a_{ij}(y)p(t,y)\big).
\end{equation}
\end{theorem}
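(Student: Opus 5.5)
The plan is to derive \eqref{eq:forward} by duality from the backward equation, testing the law of $X_t$ against smooth compactly supported functions. This is the ``integration by parts gives the forward operator'' link announced in the introduction. Fix $f\in C_c^\infty(\R^d)$ and let $\mu$ be the law of $X_0$. By \eqref{eq:duality},
\[
\int_{\R^d} f(y)\,p(t,y)\dd y=\int_{\R^d}(P_tf)(x)\,\mu(\dd x).
\]
Differentiating in $t$ and using \eqref{eq:gencommute} in the form $\frac{\dd}{\dd t}P_tf=P_t\mathcal{A}f$ gives
\[
\frac{\dd}{\dd t}\int f\,p(t,\cdot)\dd y=\int (P_t\mathcal{A}f)\dd\mu=\int (\mathcal{A}f)(y)\,p(t,y)\dd y,
\]
where the last step applies \eqref{eq:duality} again, with $\mathcal{A}f$ in place of $f$. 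I deliberately use the form $P_t\mathcal{A}f$ rather than $\mathcal{A}P_tf$. That way the only smoothness needed in $x$ is that of $f$, and we never need $u(t,\cdot)$ to be differentiable. Alternatively, I can bypass semigroup theory altogether. Apply It\^o's formula to $f(X_t)$ with \eqref{eq:diffsde}, take expectations so that the stochastic integral (whose integrand $\nabla f^\top\sigma$ is bounded on the compact support of $f$) vanishes, and obtain Dynkin's identity $\E f(X_t)-\E f(X_0)=\int_0^t\E[\mathcal{A}f(X_s)]\dd s$. This is the same statement, and it requires only local boundedness of $b,\sigma$.

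Next, insert the explicit form \eqref{eq:diffgen} and integrate by parts in $y$: once for the drift term and twice for the diffusion term. Because $f$ has compact support, all boundary terms vanish, and we get
\[
\int (\mathcal{A}f)\,p\dd y=\int f\Big[-\sum_i\partial_i(b_ip)+\tfrac12\sum_{i,j}\partial^2_{ij}(a_{ij}p)\Big]\dd y=\int f\,\mathcal{A}^*p\dd y.
\]
This needs $b_ip\in C^1$ and $a_{ij}p\in C^2$ in $y$, which is exactly what the hypothesis ``coefficients and density sufficiently regular'' provides. Together the two steps give $\int f\,(\partial_tp-\mathcal{A}^*p)\dd y=0$ for all $f\in C_c^\infty$, after justifying the exchange of $\frac{\dd}{\dd t}$ with $\int\cdot\dd y$, for instance by assuming $\partial_tp$ is continuous and locally integrable uniformly on compact time intervals. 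Since $C_c^\infty$ is dense, the bracket vanishes almost everywhere, and by continuity it vanishes everywhere. This is \eqref{eq:forward}, first in the weak (distributional) sense and then classically.

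The main obstacle is not the algebra but the regularity bookkeeping. One point is that $C_c^\infty$ should lie in $\mathcal{D}(\mathcal{A})$, or at least that the Dynkin identity should hold for such $f$; the It\^o route gives the latter directly. The other is justifying the differentiation under the integral together with the integrations by parts. I will handle both by stating the result first in weak form, which needs only the It\^o and Dynkin step and local boundedness of the coefficients. I then upgrade to the classical equation by invoking the assumed smoothness of $p$, $b$, and $a$. For degenerate or boundary cases, such as CIR near $0$, the compact support of $f$ inside the open state space keeps the argument valid in the interior, and the boundary behavior is left to the domain discussion of \S\ref{sec:generators}.
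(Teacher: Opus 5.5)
Your proposal is correct and is essentially the paper's own argument: the Bridge following the theorem (and the proof of Theorem~\ref{thm:conditional-law-spine}) differentiates the duality \eqref{eq:duality} to obtain $\frac{\dd}{\dd t}\int f\,p(t,\cdot)\dd y=\int(\mathcal{A}f)\,p(t,\cdot)\dd y$, then integrates by parts (once for the drift, twice for the diffusion term) to identify $\mathcal{A}^*p$. Your compactly supported test functions, the $d$-dimensional integration by parts, and the It\^o--Dynkin alternative (which is the paper's backward-equation bridge) make explicit the regularity and boundary-term bookkeeping that the paper handles informally, and only in one dimension.
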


\bridge{The two equations come from two one-line calculations, and both are derivations rather than analogies. \emph{Backward:} take $\E_x$ of It\^o's formula \eqref{eq:ito} (the martingale term has zero mean), giving $P_tf(x)-f(x)=\int_0^t P_s(\mathcal{A}f)(x)\dd s$; differentiating in $t$ yields \eqref{eq:gencommute} and hence \eqref{eq:backward}. \emph{Forward:} $\mathcal{A}^*$ is defined by $\inner{\mathcal{A}f}{g}_{L^2}=\inner{f}{\mathcal{A}^*g}_{L^2}$; in one dimension, integrating by parts (twice for the second-order term, discarding boundary terms),
\[
\int\!\Big(b f'+\tfrac{a}{2}f''\Big)g\dd x=\int f\Big(-(bg)'+\tfrac{1}{2}(a g)''\Big)\dd x,
\]
so $\mathcal{A}^*g=-(bg)'+\tfrac12(ag)''$, which is \eqref{eq:forward}. The backward and forward equations are therefore literally adjoint, and differentiating the duality \eqref{eq:duality} states it cleanly:
\[
\frac{\dd}{\dd t}\int f(y)p(t,y)\dd y=\int(\mathcal{A}f)(y)p(t,y)\dd y=\int f(y)(\mathcal{A}^*p)(t,y)\dd y.
\]}

Adding a killing or discount potential gives the Feynman--Kac formula. If $V\ge0$ and
\begin{equation}\label{eq:fk}
u(t,x)=\E_x\!\Big[f(X_t)\exp\!\Big(-\!\int_0^t V(X_s)\dd s\Big)\Big],
\end{equation}
then, under standard conditions, $\partial_t u=\mathcal{A}u-Vu$ with $u(0,\cdot)=f$. This is the bridge between expectations of path functionals and linear PDEs \citep{KaratzasShreve1991,Oksendal2003}, and it is the form that returns in pricing (\S\ref{sec:pricing}).

\begin{outhread}[both Kolmogorov equations and the invariant law]\label{ou:kolmogorov}
For OU the backward and forward operators are
\[
\mathcal{A}u=\theta(\mu-x)\partial_x u+\tfrac{\sigma^2}{2}\partial_{xx}u,\qquad
\mathcal{A}^*p=-\partial_y\big(\theta(\mu-y)p\big)+\tfrac{\sigma^2}{2}\partial_{yy}p.
\]
The stationary forward equation $\mathcal{A}^*p_\infty=0$ has the Gaussian solution
\begin{equation}\label{eq:ouinvariant}
p_\infty(y)\ \propto\ \exp\!\Big(-\frac{\theta(y-\mu)^2}{\sigma^2}\Big),\qquad\text{i.e.}\quad p_\infty=\mathcal{N}\!\Big(\mu,\frac{\sigma^2}{2\theta}\Big),
\end{equation}
the $t\to\infty$ limit of $v(t)$ in Thread~\ref{ou:semigroup}. The invariant law is read off as the kernel of $\mathcal{A}^*$.
\end{outhread}

\begin{cirthread}[forward operator, Gamma invariant law, and ergodicity]\label{cir:kolmogorov}
For CIR the forward operator is
\[
\mathcal{A}^*p=-\partial_y\big(\theta(\mu-y)p\big)+\tfrac{\sigma^2}{2}\partial_{yy}\big(y\,p\big),
\]
and the stationary equation $\mathcal{A}^*p_\infty=0$ has the \emph{Gamma} solution
\begin{equation}\label{eq:cirinvariant}
p_\infty(y)\ \propto\ y^{\alpha-1}\e^{-\beta y}\ \ (y>0),\qquad \alpha=\frac{2\theta\mu}{\sigma^2},\quad \beta=\frac{2\theta}{\sigma^2},
\end{equation}
i.e.\ $p_\infty=\mathrm{Gamma}(\alpha,\beta)$ with mean $\alpha/\beta=\mu$ and variance $\alpha/\beta^2=\mu\sigma^2/2\theta$, the $t\to\infty$ limit of the conditional variance in Thread~\ref{cir:semigroup}. Where OU relaxes to a Gaussian, CIR relaxes to a Gamma. The process is positive Harris recurrent and exponentially ergodic at the spectral-gap rate $\theta$ \citep{MeynTweedie2009,BGL2014}, the non-Gaussian analogue of the $W_2$ contraction of \S\ref{sec:wasserstein}.
\end{cirthread}

\section{The geometry of forward evolution: a Wasserstein gradient flow}\label{sec:wasserstein}

The forward equation \eqref{eq:forward} is an evolution of \emph{densities}; it therefore has a geometry, and that geometry is the natural meaning of ``the forward evolution of laws.'' Let $\mathcal{P}_2(\R^d)$ be the probability measures with finite second moment, equipped with the quadratic Wasserstein distance
\[
W_2(\rho_0,\rho_1)^2=\inf_{\gamma\in\Pi(\rho_0,\rho_1)}\int_{\R^d\times\R^d}|x-y|^2\dd\gamma(x,y),
\]
the infimum over couplings $\gamma$ with marginals $\rho_0,\rho_1$ \citep{Villani2009}. For a reversible diffusion with gradient drift $b=-\nabla\Psi$ and constant diffusion $D=\tfrac12\sigma^2$, the Fokker--Planck equation reads $\partial_t\rho=\nabla\!\cdot(\rho\nabla\Psi)+D\Delta\rho$.

\begin{theorem}[Jordan--Kinderlehrer--Otto]\label{thm:jko}
The Fokker--Planck flow $\partial_t\rho=\nabla\!\cdot(\rho\nabla\Psi)+D\Delta\rho$ is the gradient flow in $(\mathcal{P}_2,W_2)$ of the free energy
\[
\mathcal{F}(\rho)=\int_{\R^d}\Psi\,\rho\dd x+D\!\int_{\R^d}\rho\log\rho\dd x
=D\,\KL(\rho\,\|\,\rho_\infty)+\text{const},\qquad \rho_\infty\propto \e^{-\Psi/D}.
\]
It is realized by the minimizing-movement (JKO) scheme: with step $\tau>0$,
\begin{equation}\label{eq:jko}
\rho^{(k+1)}=\argmin_{\rho\in\mathcal{P}_2}\Big\{\mathcal{F}(\rho)+\frac{1}{2\tau}W_2(\rho,\rho^{(k)})^2\Big\},
\end{equation}
which converges to the continuous flow as $\tau\downarrow0$ \citep{JKO1998,Otto2001,AGS2008}.
\end{theorem}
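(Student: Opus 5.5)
The plan has three parts: identify the free energy, show formally that the Fokker--Planck equation is its $W_2$-gradient flow, and then prove the JKO scheme converges.

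\emph{Free energy.} Write $\rho_\infty=Z^{-1}\e^{-\Psi/D}$, assuming $\e^{-\Psi/D}$ is integrable. Then $\log\rho_\infty=-\Psi/D-\log Z$, so
\[
D\,\KL(\rho\,\|\,\rho_\infty)=D\!\int\rho\log\rho\dd x+\int\Psi\rho\dd x+D\log Z,
\]
which is $\mathcal{F}(\rho)$ plus a constant. Next, $\rho_\infty$ is stationary for the flow. Indeed $D\nabla\rho_\infty=-\rho_\infty\nabla\Psi$, so $\rho\nabla\Psi+D\nabla\rho$ vanishes at $\rho=\rho_\infty$. This matches reading the invariant law off as the kernel of $\mathcal{A}^*$, as in Thread~\ref{ou:kolmogorov}.

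\emph{Formal gradient-flow identification (Otto calculus).} Tangent vectors at $\rho$ are $\partial_t\rho=-\nabla\!\cdot(\rho\nabla\phi)$, with metric $\int|\nabla\phi|^2\rho\dd x$. This metric is what Benamou--Brenier makes equal to the infinitesimal $W_2$ length. The first variation of $\mathcal{F}$ is $\delta\mathcal{F}/\delta\rho=\Psi+D(\log\rho+1)$. The Riesz representative of $d\mathcal{F}$ in this metric is therefore $-\nabla\!\cdot(\rho\nabla\,\delta\mathcal{F}/\delta\rho)$. The gradient flow $\partial_t\rho=\nabla\!\cdot(\rho\nabla(\Psi+D\log\rho))$ expands to $\nabla\!\cdot(\rho\nabla\Psi)+D\Delta\rho$, which is exactly the stated equation. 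Along the flow, $\frac{\dd}{\dd t}\mathcal{F}=-\int|\nabla(\Psi+D\log\rho)|^2\rho\dd x$, the dissipation identity that the rigorous proof must reproduce.

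\emph{Rigorous JKO convergence.} I would follow \citet{JKO1998} in four steps.
\begin{enumerate}
\item \emph{Well-posedness of each step.} Assume $\Psi\ge0$ smooth with $|\nabla\Psi|\le C(\Psi+1)$, $\rho^{(0)}\in\mathcal{P}_2$ and $\mathcal{F}(\rho^{(0)})<\infty$. The functional in \eqref{eq:jko} is bounded below, because the entropy is controlled from below by the second moment. Its sublevel sets are narrowly compact, and it is lower semicontinuous. Strict convexity of $\rho\log\rho$ gives uniqueness of the minimizer.
\item \emph{Euler--Lagrange equation.} Perturb the minimizer by a push-forward $(\Id+\varepsilon\xi)_\#\rho^{(k+1)}$ with $\xi\in C_c^\infty$. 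Let $T$ be the Brenier map from $\rho^{(k+1)}$ to $\rho^{(k)}$, and use the first variation of $W_2^2$ along this perturbation. This gives the discrete weak form
\[
\frac1\tau\!\int(x-T(x))\cdot\xi\,\rho^{(k+1)}\dd x=\int\big(\nabla\Psi\cdot\xi-D\,\nabla\!\cdot\xi\big)\rho^{(k+1)}\dd x.
\]
A Taylor expansion of the test function, using $|x-T(x)|^2$ integrated against $\rho^{(k+1)}$, turns this into a consistent time discretization of the weak Fokker--Planck equation, with an error of order $W_2^2/\tau$.
\item \emph{A priori estimates.} Compare the minimizer with the competitor $\rho=\rho^{(k)}$ and sum over $k$. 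This yields $\sum_k W_2(\rho^{(k+1)},\rho^{(k)})^2\le 2\tau(\mathcal{F}(\rho^{(0)})-\inf\mathcal{F})$. It gives $1/2$-H\"older equicontinuity in $W_2$ of the piecewise-constant interpolants, uniform second-moment bounds, and uniform entropy bounds.
\item \emph{Passage to the limit.} Use Arzel\`a--Ascoli in $W_2$ to extract a limit curve, with weak $L^1$ compactness coming from the entropy bound. The limit satisfies the weak Fokker--Planck equation. Uniqueness of weak solutions, for instance by duality with the backward equation of Theorem~\ref{thm:backward}, upgrades subsequential convergence to convergence of the whole family.
\end{enumerate}

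\emph{Main obstacle.} The hardest step is the Euler--Lagrange derivation: justifying the first variation of $W_2^2$ and of the entropy under push-forward requires Brenier's theorem and care with the Jacobian $\det(\Id+\varepsilon\nabla\xi)$. If the direct route fails, I would instead invoke the abstract theory of \citet{AGS2008}. On $(\mathcal{P}_2,W_2)$, $\mathcal{F}$ is $\lambda$-geodesically convex whenever $\nabla^2\Psi\ge\lambda$, and in general lower semicontinuous with a strong upper gradient. Minimizing movements then converge to the curve of maximal slope, which is identified with the Fokker--Planck solution.
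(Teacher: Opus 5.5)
The paper does not prove this theorem; it defers to \citet{JKO1998}, \citet{Otto2001} and \citet{AGS2008}. Your outline (Otto calculus for the formal identification, then existence, Euler--Lagrange equation, a priori estimates and compactness for the scheme) is exactly that cited argument. The free-energy identity, the stationarity of $\rho_\infty$, the Otto gradient and the dissipation law are correct. Since $\nabla(\Psi+D\log\rho)=D\nabla\log(\rho/\rho_\infty)$, your $\frac{\dd}{\dd t}\mathcal{F}=-D^2\mathcal{I}(\rho_t\|\rho_\infty)$ is the paper's $\frac{\dd}{\dd t}\KL=-D\,\mathcal{I}$.

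The genuine problem is in the step you single out as the crux: your discrete Euler--Lagrange identity has the wrong sign. Write $\rho=\rho^{(k+1)}$ and $\rho_\varepsilon=(\Id+\varepsilon\xi)_\#\rho$. Because $(\Id+\varepsilon\xi,T)_\#\rho$ couples $\rho_\varepsilon$ with $\rho^{(k)}$, the function $\varepsilon\mapsto\mathcal{F}(\rho_\varepsilon)+\frac{1}{2\tau}\int|x+\varepsilon\xi(x)-T(x)|^2\rho\dd x$ is bounded below by the JKO objective at $\rho_\varepsilon$, hence by its minimal value, and the two coincide at $\varepsilon=0$. So its derivative vanishes there. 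Using $\frac{\dd}{\dd\varepsilon}\big|_{\varepsilon=0}\int\rho_\varepsilon\log\rho_\varepsilon\dd x=-\int\rho\,\nabla\!\cdot\xi\dd x$, this gives
\[
\frac1\tau\int(x-T(x))\cdot\xi\,\rho\dd x=-\int\big(\nabla\Psi\cdot\xi-D\,\nabla\!\cdot\xi\big)\rho\dd x .
\]
Your own Otto calculus predicts the same thing. $T$ carries $\rho^{(k+1)}$ back to $\rho^{(k)}$, so $x-T(x)\approx\tau v$ with $v=-\nabla(\Psi+D\log\rho)$.

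The sign is not cosmetic. Take $\xi=\nabla\zeta$ and use $T_\#\rho^{(k+1)}=\rho^{(k)}$ together with a second-order Taylor expansion of $\zeta$. The corrected identity then gives
\[
\int\zeta\,\rho^{(k+1)}\dd x-\int\zeta\,\rho^{(k)}\dd x=\tau\int\big(D\Delta\zeta-\nabla\Psi\cdot\nabla\zeta\big)\rho^{(k+1)}\dd x+O\big(\norm{\nabla^2\zeta}_\infty W_2(\rho^{(k+1)},\rho^{(k)})^2\big),
\]
which is a consistent discretization of the weak form of $\partial_t\rho=\nabla\!\cdot(\rho\nabla\Psi)+D\Delta\rho$. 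With your sign the right-hand side flips. The scheme then discretizes the time-reversed, ill-posed equation $\partial_t\rho=-\nabla\!\cdot(\rho\nabla\Psi)-D\Delta\rho$, and Step~4 would identify the wrong limit. Once the sign is fixed, the rest goes through as in JKO.

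Two smaller points:
\begin{itemize}
\item The piecewise-constant interpolants only satisfy $W_2(\rho_\tau(t),\rho_\tau(s))\le C\sqrt{|t-s|+\tau}$. You therefore need the refined Arzel\`a--Ascoli lemma in the narrow topology, not genuine H\"older equicontinuity.
\item Your AGS fallback identifies the limit cleanly only when $\nabla^2\Psi\ge\lambda$. The direct JKO argument, under $\Psi\ge0$ and $|\nabla\Psi|\le C(\Psi+1)$, needs no convexity.
\end{itemize}
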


The forward evolution of laws is thus a steepest descent of free energy: the system moves to lower energy $\int\Psi\rho$ and higher entropy, and the two balance at the Gibbs measure $\rho_\infty$. With the convention
\[
\mathcal{I}(\rho\|\rho_\infty)=\int_{\R^d}\left|\nabla\log\frac{\rho}{\rho_\infty}\right|^2\rho\dd x,
\]
energy dissipation is exactly the decay of relative entropy,
\[
\frac{\dd}{\dd t}\KL(\rho_t\|\rho_\infty)=-D\,\mathcal{I}(\rho_t\|\rho_\infty)\le0,
\]
where a different normalization of $\mathcal{I}$ merely moves the factor $D$ \citep{BGL2014}.

\begin{outhread}[mean reversion as Wasserstein contraction]\label{ou:wasserstein}
OU is the canonical instance: $\Psi(x)=\tfrac{\theta}{2}(x-\mu)^2$ is $\theta$-convex and $D=\tfrac12\sigma^2$, so $\rho_\infty=\mathcal{N}(\mu,\sigma^2/2\theta)$ as in \eqref{eq:ouinvariant} and the flow is uniformly geodesically convex. The contraction rate is exactly the spectral gap $\theta$, by an elementary synchronous coupling: if $X$ and $X'$ solve \eqref{eq:ousde} driven by the \emph{same} $W$, then $\dd(X_t-X'_t)=-\theta(X_t-X'_t)\dd t$, hence $|X_t-X'_t|=|X_0-X'_0|\e^{-\theta t}$ pathwise, and therefore
\begin{equation}\label{eq:w2contract}
W_2(\rho_t,\rho'_t)\le \e^{-\theta t}\,W_2(\rho_0,\rho'_0).
\end{equation}
Mean reversion, the relaxation rate $n\theta$ of \eqref{eq:mehler}, and exponential convergence to the Gaussian invariant law are one and the same fact, seen as a contraction in $W_2$.
\end{outhread}

Figure~\ref{fig:ou} makes the thread concrete: every curve is evaluated from a closed form derived above, with no simulation. Panel~(a) shows the law \eqref{eq:ousol} relaxing to the invariant Gaussian \eqref{eq:ouinvariant}; panel~(b) the mean reversion of \eqref{eq:ousol} with the $\pm\sqrt{v(t)}$ band; panel~(c) the spectral decay \eqref{eq:mehler}; and panel~(d) anticipates \S\ref{sec:pricing}, the Vasicek yield curves implied by the same generator.

\begin{figure}[htbp]
\centering
\includegraphics[width=\textwidth]{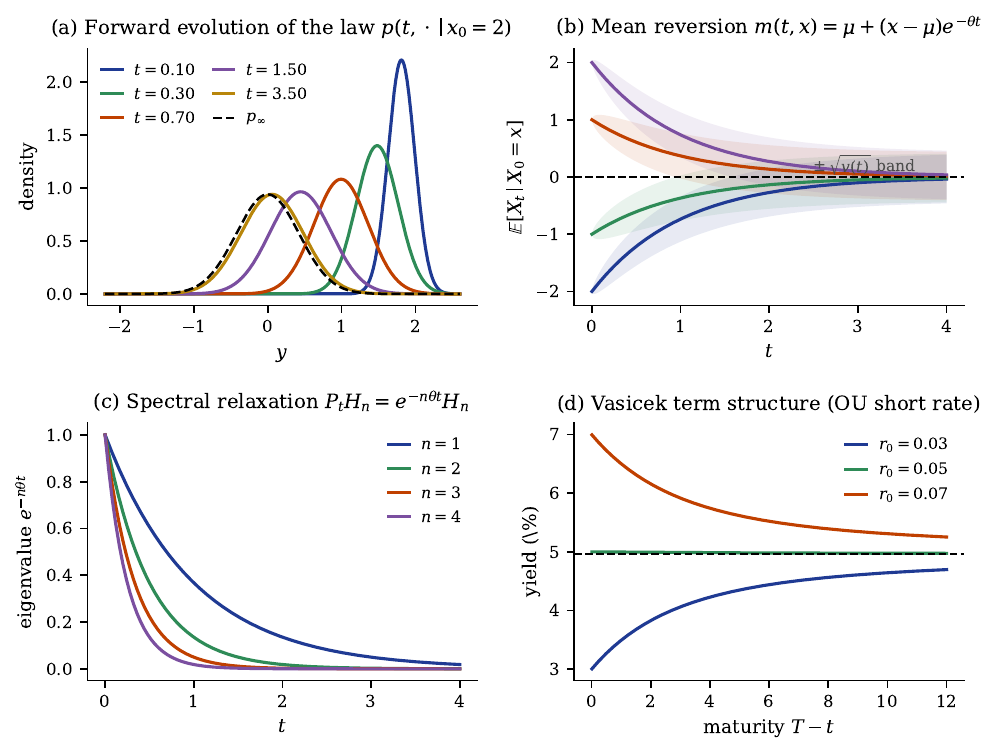}
\caption{The Ornstein--Uhlenbeck thread, computed from the paper's closed forms ($\theta=1$, $\mu=0$, $\sigma=0.6$; panel (d) uses short-rate parameters $\theta=0.6,\mu=0.05,\sigma=0.015$). (a) The conditional density $p(t,\cdot\mid x_0)$ evolves forward to the invariant law $p_\infty=\mathcal{N}(\mu,\sigma^2/2\theta)$. (b) The conditional mean reverts at rate $\theta$ while the conditional variance saturates at $\sigma^2/2\theta$. (c) The Hermite eigenvalues $\e^{-n\theta t}$ set the relaxation times. (d) The implied Vasicek term structure is upward-, flat-, or downward-sloping as the short rate sits below, at, or above its long-run level.}
\label{fig:ou}
\end{figure}

\subsection{Three geometries: Hilbert, Fisher--Rao, Wasserstein}\label{sec:threegeom}
The Wasserstein picture is one corner of a triangle of geometries that the rest of the paper has been using implicitly, each natural to a different object of forecasting.
\begin{itemize}\itemsep2pt
\item \emph{Hilbert ($L^2$) geometry of observables.} Conditional expectation is orthogonal projection in $L^2(\PP)$ (Theorem~\ref{thm:projection}); point forecasting is a Hilbert-space projection and the residual is orthogonal to present information. This is the geometry of \S\ref{sec:conditional}.
\item \emph{Fisher--Rao geometry of model families.} On a parametric family $\{p_\vartheta\}$ the Fisher information $g_{ij}(\vartheta)=\E_\vartheta[\partial_i\log p_\vartheta\,\partial_j\log p_\vartheta]$ is the canonical Riemannian metric, with relative entropy as the associated divergence \citep{Amari2016,AyJostLeSchwachhoefer2017}. This is the geometry in which the \emph{estimation} of \S\ref{sec:estimation} lives, and the Cram\'er--Rao bound is its length element.
\item \emph{Wasserstein geometry of laws.} On $\mathcal{P}_2$ the metric $W_2$ is the one under which the forward equation is the gradient flow of free energy (Theorem~\ref{thm:jko}). This is the geometry of \S\ref{sec:wasserstein}.
\end{itemize}
The three meet rather than merely coexist. The same relative Fisher information $\mathcal{I}(\rho\|\rho_\infty)$ that controls Wasserstein-side entropy dissipation, $\tfrac{\dd}{\dd t}\KL=-D\,\mathcal{I}$, is the infinitesimal Fisher--Rao length along the flow; and the Gaussian family of the OU thread carries explicit forms of all three metrics at once, just as the Gamma family does for CIR. Forecasting (Hilbert), estimating the generator (Fisher--Rao), and propagating the law (Wasserstein) are thus three faces of one geometric object.

\section{The pathwise picture: stochastic differential equations}\label{sec:sde}

Generators describe the future in law; SDEs describe it path by path.

\begin{theorem}[Existence, uniqueness, Markov property, generator]\label{thm:sde}
Assume $b:\R^d\to\R^d$ and $\sigma:\R^d\to\R^{d\times m}$ are globally Lipschitz with linear growth. Then
\begin{equation}\label{eq:sde}
\dd X_t=b(X_t)\dd t+\sigma(X_t)\dd W_t,\qquad X_0=x,
\end{equation}
has a unique strong solution; the solution is a Feller--Markov process with generator \eqref{eq:diffgen} on $C^2_c(\R^d)$. For $f\in C^2_c(\R^d)$, It\^o's formula gives
\begin{equation}\label{eq:ito}
\dd f(X_t)=\underbrace{(\mathcal{A}f)(X_t)\dd t}_{\text{predictable drift}}+\underbrace{\nabla f(X_t)^\top\sigma(X_t)\dd W_t}_{\text{martingale increment}}.
\end{equation}
\end{theorem}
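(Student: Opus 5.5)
The plan is to proceed in four steps: strong existence and uniqueness by a Picard iteration, the Markov property from pathwise uniqueness and the flow property, the Feller property from continuous dependence on the initial condition, and finally Itô's formula together with the identification of the generator.

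\emph{Existence and uniqueness.} Fix a horizon $T$ and work in the Banach space $\mathcal{S}_T$ of continuous adapted processes with norm $\E[\sup_{t\le T}|X_t|^2]^{1/2}$. Define the Picard map
\[
\Phi(X)_t=x+\int_0^t b(X_s)\dd s+\int_0^t\sigma(X_s)\dd W_s .
\]
By linear growth, Doob's $L^2$ maximal inequality and the It\^o isometry, $\Phi$ maps $\mathcal{S}_T$ into itself. By the Lipschitz bound, with constant $L$,
\[
\E\Big[\sup_{s\le t}|\Phi(X)_s-\Phi(Y)_s|^2\Big]\le C_T L^2\int_0^t \E\Big[\sup_{r\le s}|X_r-Y_r|^2\Big]\dd s .
\]
Iterating this gives the factorial bound $(C_TL^2t)^n/n!$, so the Picard iterates converge in $\mathcal{S}_T$ to a strong solution. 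Gronwall's inequality applied to the same estimate gives pathwise uniqueness. Patching the solutions over $T\in\mathbb{N}$ gives a solution on $[0,\infty)$.

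\emph{Markov and Feller property.} Write $X^{x}$ for the solution started at $x$. Uniqueness gives the flow identity $X_{t+s}=X^{X_t}_s\circ\theta_t$, where the equation is driven by the shifted Brownian motion $W_{t+\cdot}-W_t$, which is independent of $\mathcal{F}_t$. The freezing lemma then gives
\[
\E[f(X_{t+s})\mid\mathcal{F}_t]=(P_sf)(X_t),\qquad (P_sf)(y)=\E[f(X^y_s)],
\]
which is \eqref{eq:markov}, and Chapman--Kolmogorov \eqref{eq:ck} follows as in the Bridge of \S\ref{sec:markov}. For the Feller property, the same Gronwall estimate gives
\[
\E\big[\sup_{s\le t}|X^x_s-X^y_s|^2\big]\le C_t|x-y|^2,
\]
so $P_t$ maps $C_b$ into $C_b$. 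Linear growth gives $\PP(|X^x_t|\le R)\to0$ as $|x|\to\infty$ for each fixed $R$, so $P_t$ maps $C_0$ into $C_0$. Strong continuity at $t=0$ follows from the uniform bound $\E|X^x_t-x|^2\le C t(1+|x|^2)$ together with uniform continuity of $f\in C_0$.

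\emph{It\^o formula and generator.} Apply the multidimensional It\^o formula to $f\in C^2_c$, using $\dd\langle X^i,X^j\rangle_t=a_{ij}(X_t)\dd t$. This yields \eqref{eq:ito} with drift $\mathcal{A}f$ as in \eqref{eq:diffgen}. Since $\nabla f$ is bounded and compactly supported, the integrand $\nabla f^\top\sigma(X)$ is bounded, so the stochastic integral is a true martingale. Taking $\E_x$ gives
\[
P_tf(x)-f(x)=\int_0^tP_s\mathcal{A}f(x)\dd s .
\]
Since $\mathcal{A}f\in C_c\subseteq C_0$ and $P$ is strongly continuous, dividing by $t$ gives convergence to $\mathcal{A}f$ uniformly in $x$. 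Hence $C^2_c\subseteq\mathcal{D}(\mathcal{A})$ and the generator acts there by \eqref{eq:diffgen}.

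The main obstacle is the Markov step. One must justify rigorously that the solution map is a measurable functional of $(x,W)$, so that $x\mapsto X^x$ can be substituted with the $\mathcal{F}_t$-measurable random initial point $X_t$. I would first try the Yamada--Watanabe style representation $X^x=F(x,W)$ with $F$ jointly measurable. An alternative is to use the continuous modification in $x$, which exists by the Lipschitz moment estimate and Kolmogorov's continuity criterion, and then apply the freezing lemma. The uniform-in-$x$ convergence for $C_0$ is a secondary technical point.
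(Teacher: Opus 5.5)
The paper gives no proof of this theorem. It is stated in representative form and deferred to the cited references. The only argument in the text is the Bridge of \S\ref{sec:kolmogorov}, which is exactly your final step: take $\E_x$ of \eqref{eq:ito} to get $P_tf-f=\int_0^tP_s\mathcal{A}f\dd s$. Your Picard / flow / freezing-lemma architecture is the standard one in those references and is sound in outline. You also correctly identify joint measurability of $(x,W)\mapsto X^x$ as the delicate point of the Markov step. Two small repairs are needed there:
\begin{itemize}
\item Pathwise uniqueness against an arbitrary solution needs localization at $\tau_N=\inf\{t:|X_t|\vee|Y_t|\ge N\}$ before Gronwall, since such a solution is not a priori in $\mathcal{S}_T$.
\item The Kolmogorov-criterion route in $\R^d$ needs the $L^p$ estimate $\E\sup_{s\le t}|X^x_s-X^y_s|^p\le C_{p,t}|x-y|^p$ for some $p>d$ (via Burkholder--Davis--Gundy), not only the case $p=2$.
\end{itemize}

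The genuine gap is $P_t(C_0)\subseteq C_0$. Your assertion that linear growth gives $\PP(|X^x_t|\le R)\to0$ as $|x|\to\infty$ is true, but nothing you derived yields it. Your consequences of linear growth are \emph{upper} moment bounds, and Chebyshev applied to $\E|X^x_t-x|^2\le Ct(1+|x|^2)$ only gives $\limsup_{|x|\to\infty}\PP(|X^x_t|\le R)\le Ct$. No sharpening of that bound will help. For $\dd X_t=X_t\dd W_t$ in $d=1$, $(X^x_t-x)/x=\e^{W_t-t/2}-1$ has a law independent of $x$. So the relative displacement never shrinks, and upper moments cannot exclude mass near the origin. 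The missing idea is a negative-moment (inverse Lyapunov) estimate. Take $g(x)=(1+|x|^2)^{-1}$. Linear growth gives $|\mathcal{A}g|\le Cg$ and a bounded integrand $\nabla g^\top\sigma$, so It\^o's formula and Gronwall yield $\E\,g(X^x_t)\le\e^{Ct}g(x)$, hence
\[
\PP(|X^x_t|\le R)\le(1+R^2)\,\e^{Ct}\,(1+|x|^2)^{-1}\longrightarrow 0 .
\]

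Relatedly, your strong-continuity sentence is not uniform in $x$ as written, because $Ct(1+|x|^2)$ grows with $|x|$. To fix it:
\begin{itemize}
\item Choose $R$ with $|f|\le\varepsilon$ off the ball of radius $R$.
\item Use uniform continuity only on $|x|\le 2R$.
\item For $|x|>2R$, apply Chebyshev at threshold $|x|/2$. This gives $\PP(|X^x_t|<R)\le 4Ct(1+|x|^2)/|x|^2$, which is uniformly small as $t\downarrow0$.
\end{itemize}
With these two additions, your generator step, which uses strong continuity on $C_0$, goes through as stated.
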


Equation \eqref{eq:ito} is the stochastic chain rule for the future: the generator gives the predictable instantaneous change of every smooth observable; the It\^o integral is the irreducible innovation. As shown in the Bridge of \S\ref{sec:kolmogorov}, taking expectations recovers the backward equation. See \citet{KaratzasShreve1991}, \citet{Oksendal2003}, and \citet{StroockVaradhan1979}.

\begin{definition}[Martingale problem]
Let $\mathcal{A}$ be an operator on a domain $\mathcal{D}(\mathcal{A})$. A process $X$ solves the martingale problem for $(\mathcal{A},\mathcal{D}(\mathcal{A}))$ if, for every $f\in\mathcal{D}(\mathcal{A})$,
\begin{equation}\label{eq:mgproblem}
M^f_t=f(X_t)-f(X_0)-\int_0^t(\mathcal{A}f)(X_s)\dd s
\end{equation}
is a martingale.
\end{definition}

The martingale problem reverses the SDE logic: instead of starting with paths and deriving the generator, it starts with the generator and asks whether a process exists whose observables have exactly the prescribed drift. Well-posedness is equivalent, in many settings, to uniqueness in law of the corresponding Markov process \citep{StroockVaradhan1979,EthierKurtz1986}.

\begin{outhread}[explicit drift--innovation split]\label{ou:sde}
Solution \eqref{eq:ousol} is already in predictable-plus-martingale form: $\mu+(X_0-\mu)\e^{-\theta t}$ is the predictable finite-variation part and $\sigma\int_0^t\e^{-\theta(t-s)}\dd W_s$ is the Gaussian martingale innovation, with conditional variance $v(t)$. The martingale problem \eqref{eq:mgproblem} is solved explicitly: $f(X_t)-f(X_0)-\int_0^t\mathcal{A}f(X_s)\dd s=\int_0^t \sigma f'(X_s)\dd W_s$.
\end{outhread}

\section{Martingales: the unpredictable part}\label{sec:martingales}

A forecast that already incorporates all available information should not be systematically improvable. This is the martingale property.

\begin{definition}[Martingale]
An adapted integrable process $M$ is a martingale if $\E[M_t\mid\mathcal{F}_s]=M_s$ for $0\le s\le t$. It is a local martingale if there exist stopping times $\tau_n\uparrow\infty$ such that $M^{\tau_n}$ is a martingale for every $n$.
\end{definition}

A martingale is a fair game: its best forecast of any future value is its present value. Decomposition theorems make precise the idea that every sufficiently regular future is predictable trend plus unpredictable innovation.

\begin{theorem}[Doob--Meyer decomposition]\label{thm:doobmeyer}
Every right-continuous submartingale $X$ of class (D) admits a unique decomposition
\begin{equation}\label{eq:doobmeyer}
X_t=X_0+A_t+M_t,
\end{equation}
where $A$ is predictable, increasing, integrable, $A_0=0$, and $M$ is a uniformly integrable martingale.
\end{theorem}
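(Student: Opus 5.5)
The plan is the classical route through the discrete-time Doob decomposition followed by a limiting argument. Uniqueness comes first because it is cheap. Suppose $X_0+A_t+M_t=X_0+A'_t+M'_t$. Then $A-A'=M'-M$ is a predictable martingale of finite variation starting at $0$. A predictable right-continuous martingale of finite variation is constant. One way to see this is that its jumps at predictable times have zero conditional expectation, so it is continuous; a continuous finite-variation local martingale has zero quadratic variation, hence vanishes. So $A=A'$ and $M=M'$.

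For existence, I would first reduce to a finite horizon $[0,T]$ and then let $T\to\infty$, using the class (D) assumption to get uniform integrability of the limiting martingale. On dyadic grids $D_n=\{kT2^{-n}\}$ I apply the discrete Doob decomposition. Set
\[
A^n_{t_{k}}=\sum_{j<k}\E\big[X_{t_{j+1}}-X_{t_j}\mid\mathcal{F}_{t_j}\big],
\]
which is increasing because $X$ is a submartingale and predictable on the grid, and let $M^n=X-X_0-A^n$. The key estimate is uniform integrability of the family $\{A^n_T\}_n$. This is the main obstacle, and here class (D) is used. I would first try the standard stopping-time argument. Define $\tau_n(\lambda)=\inf\{t_k: A^n_{t_{k+1}}>\lambda\}$, which is a stopping time by predictability. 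Optional stopping for the discrete martingale $M^n$ gives
\[
\E[A^n_T\mathbf 1_{\{A^n_T>\lambda\}}]\le\E[(X_T-X_{\tau_n(\lambda)})\mathbf 1_{\{\tau_n(\lambda)<T\}}]+\lambda\PP(\tau_n(\lambda)<T).
\]
Applying the same bound at level $\lambda/2$ controls the last term. Class (D) makes $\{X_\tau\}$ uniformly integrable over stopping times, and $\PP(\tau_n(\lambda)<T)\le \E[A^n_T]/\lambda\to0$ uniformly in $n$. Together these give uniform integrability.

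With uniform integrability in hand, Dunford--Pettis extracts a weakly convergent subsequence $A^n_T\rightharpoonup A_T$ in $L^1$. A Komlós-type or convex-combination argument can also be used if a.s. convergence is preferred. I then define $M_t=\E[X_T-X_0-A_T\mid\mathcal{F}_t]$, taking the right-continuous version, which exists under the usual conditions. Finally I set $A=X-X_0-M$. Weak convergence passes through the conditional expectations on dyadic points, so $A^n_t\rightharpoonup A_t$ there. This shows $A$ is increasing on dyadics, and by right-continuity of $X$ and $M$ it is increasing everywhere, with $A_0=0$ and $A_T$ integrable.

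The remaining delicate step is predictability of $A$. Each $A^n$ is predictable on its grid, but predictability is not obviously preserved under weak limits. I would prove it in the natural-process form: $\E\int_0^T \xi_s\dd A_s=\E\int_0^T\xi_{s-}\dd A_s$ for bounded martingales $\xi$. This identity holds for each $A^n$ by a discrete computation and passes to the limit. Then I would invoke the standard result that, under the usual conditions, a natural increasing integrable process is predictable. If time permits I would instead take the cleaner route of showing that $\Delta A_\tau=\E[\Delta X_\tau\mid\mathcal{F}_{\tau-}]$ at predictable times; either way, this identification is where the real technical work sits.
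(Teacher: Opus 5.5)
The paper does not prove this theorem. It is quoted as a standard result, and the closing remark on rigor only notes that the class (D) hypothesis matters, so there is no in-paper argument to compare yours against.

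Your outline is the classical proof: Rao's dyadic-approximation argument as presented by Karatzas--Shreve, followed by the identification of natural with predictable increasing processes. It is sound on the following points:
\begin{itemize}
\item The uniqueness argument is correct. It implicitly uses that the jumps of a c\`adl\`ag predictable process are exhausted by predictable times.
\item Your optional-stopping inequality and the $\lambda/2$ trick are exactly right. The fact that makes $\PP(A^n_T>\lambda)\to0$ uniformly is that $\E[A^n_T]=\E[X_T-X_0]$ does not depend on $n$.
\item The Dunford--Pettis limit, the convergence $A^n_t\rightharpoonup A_t$ at dyadic $t$, the normalization $A_0=0$, and the monotonicity of $A$ all go through.
\end{itemize}

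Two steps need more than your sketch gives them.

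\emph{The naturality limit.} You cannot pass to the limit in $\E\int_0^T\xi_{s-}\dd A^n_s$ using only weak convergence of $A^n$ at grid points, because the weights $\xi_{t_k-}$ and the number of terms change with $n$. The limit has to be moved onto the integrand. Since $X-A$ is a martingale,
\[
A^n_{t_{k+1}}-A^n_{t_k}=\E[X_{t_{k+1}}-X_{t_k}\mid\mathcal{F}_{t_k}]=\E[A_{t_{k+1}}-A_{t_k}\mid\mathcal{F}_{t_k}].
\]
It follows that
\[
\E[\xi_TA^n_T]=\E\sum_k\xi_{t_k}\big(A_{t_{k+1}}-A_{t_k}\big).
\]
The left side tends to $\E[\xi_TA_T]=\E\int_0^T\xi_s\dd A_s$ by weak convergence. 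The right side tends to $\E\int_0^T\xi_{s-}\dd A_s$ by bounded convergence.

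\emph{Your fallback route does not work as stated.} The identity $\Delta A_\tau=\E[\Delta X_\tau\mid\mathcal{F}_{\tau-}]$ at predictable times does not imply predictability. Take a Poisson process $X=N$ on $[0,T]$. The trivial choice $A=N$, $M=0$ satisfies the identity, since both sides vanish at every predictable $\tau$. Yet $N$ is not predictable, and the true compensator is $\lambda t$. You would also have to rule out jumps of $A$ at totally inaccessible times.

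Keep in mind that ``natural implies predictable'' is where the deep input from the general theory enters (Dol\'eans; see Dellacherie--Meyer), so your proof imports exactly the hard part. The Koml\'os route you mention, as in Beiglb\"ock--Schachermayer--Veliyev, avoids this. It obtains $A$ as an a.s.\ limit of convex combinations of left-continuous, hence predictable, step processes, with the jump times handled by a separate stopping-time argument.
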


The predictable part $A$ is systematic accumulation; the martingale part $M$ is residual surprise. In Brownian filtrations the surprise has a canonical representation.

\begin{theorem}[Martingale representation]\label{thm:mgrep}
Let $(\mathcal{F}_t)$ be the augmented natural filtration of an $m$-dimensional Brownian motion $W$. Every square-integrable $(\mathcal{F}_t)$-martingale $M$ admits
\begin{equation}\label{eq:mgrep}
M_t=M_0+\int_0^t H_s\dd W_s,
\end{equation}
for a predictable $H$ with $\E\int_0^T\norm{H_s}^2\dd s<\infty$ for each finite $T$.
\end{theorem}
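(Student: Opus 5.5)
The plan is the standard Hilbert-space route: first represent square-integrable terminal variables as stochastic integrals, then pass to martingales by conditioning. Fix a finite horizon $T$ and work in $L^2(\Omega,\mathcal{F}_T,\PP)$. Let $\mathcal{H}_T$ be the set of random variables of the form $c+\int_0^T H_s\dd W_s$ with $c\in\R$ and $H$ predictable with $\E\int_0^T\norm{H_s}^2\dd s<\infty$. By the It\^o isometry the map $(c,H)\mapsto c+\int_0^T H\dd W$ is an isometry from $\R\oplus L^2(\Omega\times[0,T],\mathcal{P})$ into $L^2(\mathcal{F}_T)$. Its domain is complete, so $\mathcal{H}_T$ is a \emph{closed} linear subspace. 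The goal is then $\mathcal{H}_T=L^2(\mathcal{F}_T)$. Equivalently, any $Z\in L^2(\mathcal{F}_T)$ orthogonal to $\mathcal{H}_T$ must vanish; this is the same orthogonality viewpoint as Theorem~\ref{thm:projection}.

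To prove density I will use exponential test functionals. For deterministic step functions $h:[0,T]\to\R^m$, set
\[
\mathcal{E}^h_T=\exp\!\Big(\int_0^T h_s\cdot\dd W_s-\tfrac12\int_0^T|h_s|^2\dd s\Big).
\]
By It\^o's formula \eqref{eq:ito}, applied to the exponential, $\mathcal{E}^h_T=1+\int_0^T \mathcal{E}^h_s h_s\dd W_s$. The integrand is square integrable because $h$ is bounded and deterministic, which gives Gaussian moments for $\mathcal{E}^h$. Hence every $\mathcal{E}^h_T$ lies in $\mathcal{H}_T$.

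The main obstacle is showing these exponentials are total in $L^2(\mathcal{F}_T)$. Suppose $Z\perp\mathcal{E}^h_T$ for all step $h$. Choose a partition $0=t_0<\dots<t_n\le T$ and $\lambda_k\in\R^m$. Then
\[
\E\big[Z\exp\big(\textstyle\sum_k\lambda_k\cdot(W_{t_k}-W_{t_{k-1}})\big)\big]=0.
\]
The function $\lambda\mapsto\E[Z\exp(\sum_k\lambda_k\cdot\Delta W_k)]$ extends to an entire function on $\mathbb{C}^{nm}$ and vanishes on real arguments, so it vanishes identically, in particular at imaginary arguments. Consequently the signed measure $\E[Z\,;\,(\Delta W_k)_k\in\cdot\,]$ has zero Fourier transform and is zero. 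This means $\E[Z\mid\sigma(W_{t_1},\dots,W_{t_n})]=0$. These $\sigma$-algebras increase to $\mathcal{F}_T$, up to null sets, as the partitions refine, so the martingale convergence theorem gives $Z=\E[Z\mid\mathcal{F}_T]=0$. The points that need care are the analytic-continuation and integrability justification, and the fact that augmentation adds only null sets.

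Finally, let $M$ be a square-integrable martingale. Applying the representation to $Z=M_T$ gives $M_T=c+\int_0^T H^{(T)}_s\dd W_s$. Conditioning on $\mathcal{F}_t$ and using the martingale property of the It\^o integral yields $M_t=c+\int_0^t H^{(T)}_s\dd W_s$, and $c=M_0$. Uniqueness of $H$ in $L^2(\dd\PP\otimes\dd s)$ follows from the isometry. It implies that $H^{(T)}$ and $H^{(T')}$ agree on $[0,T\wedge T']$, so they patch into a single predictable $H$ on $[0,\infty)$ satisfying \eqref{eq:mgrep}.
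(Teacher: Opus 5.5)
The paper gives no proof of Theorem~\ref{thm:mgrep}. It states the result in representative form and defers to \citet{RevuzYor1999}, \citet{KaratzasShreve1991}, and \citet{Protter2005}, so there is no internal argument to match against.

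Your proof is correct. It is the standard Hilbert-space argument, essentially the one in \citet{RevuzYor1999}, Ch.~V, and in \citet{Oksendal2003}. The It\^o isometry makes $\mathcal{H}_T$ a closed subspace of $L^2(\mathcal{F}_T)$, and the exponentials $\mathcal{E}^h_T$ of deterministic step functions lie in it. Their totality follows from your analytic-continuation and Fourier-uniqueness step; the integrability it needs is Cauchy--Schwarz plus Gaussian exponential moments, locally uniformly in $\lambda$. Conditioning, together with uniqueness from the isometry, then passes from terminal variables to the whole martingale and lets you patch the $H^{(T)}$.

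The main alternative in the cited literature, e.g.\ \citet{Protter2005}, goes through stable subspaces. By L\'evy's characterization, $\PP$ is the only measure on $\mathcal{F}^W_\infty$ under which $W$ is a martingale with $[W^i,W^j]_t=\delta_{ij}t$. It is therefore extremal among martingale measures for $W$, and the Jacod--Yor theorem then gives the predictable representation property. That route extends to other drivers but needs heavier machinery; yours is elementary and specific to Gaussian noise.

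A few details to tidy:
\begin{itemize}
\item The paper's It\^o formula \eqref{eq:ito} is stated only for $f\in C^2_c$ applied to solutions of the SDE, so invoke the general It\^o formula, or compute directly, for $\mathcal{E}^h$.
\item Use nested (e.g.\ dyadic) partitions so the $\sigma$-algebras genuinely increase to $\mathcal{F}^W_T$.
\item If ``augmented'' is read as including right-continuity, the reduction to null sets uses that the completed Brownian filtration is already right-continuous (Blumenthal's $0$--$1$ law).
\item Conditioning gives the representation only a.s.\ for each fixed $t$. It becomes an identity up to indistinguishability once $M$ is taken c\`adl\`ag: the stochastic integral is continuous, so both sides agree on the rationals and hence everywhere.
\end{itemize}
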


In a Brownian world, every uncertain future is the present plus accumulated fresh noise weighted by a predictable sensitivity. This underlies hedging in complete markets, innovations in filtering, and stochastic control \citep{RevuzYor1999,KaratzasShreve1991,Protter2005}. The continuous-time decomposition \eqref{eq:doobmeyer} has an exact discrete analogue (\S\ref{sec:discrete}).

\begin{remark}[The integrand is a conditional expectation]
Martingale representation \eqref{eq:mgrep} is existential, but the Clark--Ocone formula makes the integrand explicit: for $F=M_T$ smooth in the Malliavin sense, $H_t=\E[D_tF\mid\mathcal{F}_t]$, the predictable projection of the Malliavin derivative \citep{Nualart2006}. The unpredictable future is thus reconstructed from a \emph{conditional expectation} of an infinitesimal perturbation --- the same projection that defines forecasting (\S\ref{sec:conditional}) and filtering (\S\ref{sec:filtering}), and, read financially, the hedge ratio.
\end{remark}

\section{Filtering: the future seen through noise}\label{sec:filtering}

Often the state $X$ is not observed directly. One observes a noisy process $Y$ and must forecast the hidden signal using the observation filtration $\mathcal{Y}_t=\sigma(Y_s:s\le t)$.

\begin{definition}[Filter]
For a signal $X$ and observation filtration $(\mathcal{Y}_t)$, the nonlinear filter is the measure-valued process $\pi_t(f)=\E[f(X_t)\mid\mathcal{Y}_t]$, defined for bounded measurable $f$.
\end{definition}

By Theorem~\ref{thm:projection}, $\pi_t(\mathrm{id})$ is the minimum-variance estimate of the current state from observations. The filter itself evolves by a Kushner--Stratonovich equation: for the diffusion signal with generator $\mathcal{A}$ and scalar observation $\dd Y_t=h(X_t)\dd t+\dd V_t$,
\begin{equation}\label{eq:ks}
\dd\pi_t(f)=\pi_t(\mathcal{A}f)\dd t+\big(\pi_t(hf)-\pi_t(h)\pi_t(f)\big)\big(\dd Y_t-\pi_t(h)\dd t\big),
\end{equation}
the forward (generator) term plus a correction driven by the innovation \citep{BainCrisan2009,LiptserShiryaev2001}. In the linear--Gaussian case the filter is finite-dimensional and explicit.

\begin{theorem}[Kalman--Bucy filter]\label{thm:kalman}
Consider the linear system
\[
\dd X_t=A X_t\dd t+B\dd W_t,\qquad \dd Y_t=C X_t\dd t+D\dd V_t,
\]
with $W,V$ independent Brownian motions and $DD^\top\succ0$. If the initial state is Gaussian and independent of $(W,V)$, then the conditional law of $X_t$ given $\mathcal{Y}_t$ is Gaussian with mean $\widehat{X}_t=\E[X_t\mid\mathcal{Y}_t]$ and covariance $\Sigma_t$ solving
\begin{align}
\dd\widehat{X}_t&=A\widehat{X}_t\dd t+\Sigma_t C^\top(DD^\top)^{-1}\big(\dd Y_t-C\widehat{X}_t\dd t\big),\label{eq:kbmean}\\
\dot{\Sigma}_t&=A\Sigma_t+\Sigma_t A^\top+BB^\top-\Sigma_t C^\top(DD^\top)^{-1}C\Sigma_t.\label{eq:kbric}
\end{align}
\end{theorem}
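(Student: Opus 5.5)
The plan is to prove the Kalman--Bucy theorem in three stages: Gaussianity of the conditional law, the mean equation \eqref{eq:kbmean} from the Kushner--Stratonovich equation \eqref{eq:ks}, and the Riccati equation \eqref{eq:kbric} from It\^o's formula applied to the error.

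\emph{Gaussianity.} The pair $(X,Y)$ is a jointly Gaussian process. This holds because the linear SDE has the explicit solution
\[
X_t=\e^{At}X_0+\int_0^t\e^{A(t-s)}B\dd W_s,
\]
which is an $L^2$-limit of linear combinations of the Gaussian family $(X_0,W,V)$. Also, $Y_t=\int_0^t CX_s\dd s+DV_t$. For finitely many observation times $s_1,\dots,s_n\le t$, the conditional law of $X_t$ given $(Y_{s_1},\dots,Y_{s_n})$ is therefore Gaussian. Its mean is the $L^2$ projection, by Theorem~\ref{thm:projection}, and its covariance is deterministic. Refining the partition and applying the martingale convergence theorem (L\'evy upward) passes these statements to $\mathcal{Y}_t$. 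The limit of Gaussian laws with deterministic covariances is Gaussian. Hence $\Law(X_t\mid\mathcal{Y}_t)=\mathcal{N}(\widehat X_t,\Sigma_t)$ with $\Sigma_t=\E[(X_t-\widehat X_t)(X_t-\widehat X_t)^\top]$ non-random.

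\emph{Mean equation.} I will use the vector form of \eqref{eq:ks}. The observation noise has covariance $R=DD^\top$, so the gain becomes $(\pi_t(hf)-\pi_t(h)\pi_t(f))R^{-1}$. For the scalar \eqref{eq:ks} I would first rescale $Y$ by $D^{-1}$, or rather by a square root of $R$, to reduce to unit noise. Take $f(x)=x_i$ and $h(x)=Cx$. Then $\pi_t(\mathcal{A}f)=(A\widehat X_t)_i$, and the covariance term $\pi_t(hf^\top)-\pi_t(h)\pi_t(f)^\top$ equals $C\Sigma_t$ by the Gaussianity step. Substituting gives \eqref{eq:kbmean} directly. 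Linear $f$ is not bounded, so I will justify its use in \eqref{eq:ks} by truncation, relying on Gaussian moment bounds. Alternatively, the mean equation can be derived directly via the innovations approach. The innovation $\nu_t=\int_0^t R^{-1/2}(\dd Y_s-C\widehat X_s\dd s)$ is a $\mathcal{Y}$-Brownian motion by L\'evy's characterization, and it generates $\mathcal{Y}$ in the linear-Gaussian case. Theorem~\ref{thm:mgrep} then represents $\widehat X_t-\int_0^t A\widehat X_s\dd s$ as a stochastic integral against $\nu$. The integrand is identified by computing $\E[X_t\nu_s^\top]$, which gives the gain $\Sigma_tC^\top R^{-1/2}$.

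\emph{Riccati equation.} Let $e_t=X_t-\widehat X_t$, so that
\[
\dd e_t=(A-K_tC)e_t\dd t+B\dd W_t-K_tD\dd V_t,\qquad K_t=\Sigma_tC^\top R^{-1}.
\]
Applying It\^o's formula to $e_te_t^\top$ and taking expectations gives
\[
\dot\Sigma_t=(A-K_tC)\Sigma_t+\Sigma_t(A-K_tC)^\top+BB^\top+K_tRK_t^\top.
\]
Substituting $K_t$ makes the cross terms combine to $-\Sigma_tC^\top R^{-1}C\Sigma_t$, which is \eqref{eq:kbric}.

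I expect the main obstacle to be the identification of the gain, i.e.\ that the filter correction involves exactly $\Sigma_tC^\top R^{-1}$ with $\Sigma_t$ deterministic. Via \eqref{eq:ks} this rests on the Gaussianity step; via innovations it rests on the fact that the innovations filtration equals $\mathcal{Y}$. I would take the Kushner--Stratonovich route, since the paper states \eqref{eq:ks} and the truncation argument is routine.
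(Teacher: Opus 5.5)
The paper does not prove this theorem. It is quoted from the filtering literature, and the only supporting argument is the Bridge that follows it, which checks that the innovation is a $(\mathcal{Y}_t)$-martingale. Your proposal is a correct derivation along a standard line. It suits the paper's architecture because it obtains \eqref{eq:kbmean} by closing the Kushner--Stratonovich equation \eqref{eq:ks} that the paper itself displays.

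Your Gaussianity step is sound. For jointly Gaussian vectors, the affine $L^2$-projection coincides with the conditional expectation of Theorem~\ref{thm:projection}, because the residual is independent of the observations. L\'evy upward convergence of the conditional characteristic functions along refining partitions, using continuity of $Y$, then passes to $\mathcal{Y}_t$. This delivers the deterministic $\Sigma_t=\E[e_te_t^\top]$ that both later steps need.

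Deriving \eqref{eq:kbric} from the error dynamics is a real economy. The alternative is to apply \eqref{eq:ks} to $f(x)=x_ix_j$, which needs vanishing Gaussian third central moments to kill the martingale part of $\Sigma_t$. Your route only needs KS for linear $f$.

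A few points of bookkeeping remain.
\begin{itemize}
\item \eqref{eq:ks} as printed is scalar, unit-noise, and for bounded $f$. You must import the vector version, since rescaling by $R^{-1/2}$ normalizes the noise covariance but does not make the observation scalar. You also need its hypothesis $\E\int_0^t\norm{CX_s}^2\dd s<\infty$, which holds here.
\item Record $\widehat X_0=\E X_0$ and $\Sigma_0=\Var(X_0)$, since $\mathcal{Y}_0$ is trivial.
\item The It\^o computation yields the Riccati equation in integral form; differentiability of $t\mapsto\Sigma_t$ follows from that.
\end{itemize}

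Your fallback innovations route is where the genuine difficulty would sit. Theorem~\ref{thm:mgrep} covers only the natural filtration of $\nu$, so you would need $\mathcal{F}^\nu_t=\mathcal{Y}_t$. The quick proof of that equality uses \eqref{eq:kbmean} itself, which is circular here. What remains is either a Volterra-inversion argument or the Fujisaki--Kallianpur--Kunita representation theorem; the paper's Bridge supplies only the martingale property. Preferring the KS route is right precisely because \eqref{eq:ks} already packages that representation result.
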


The process $\dd\nu_t=\dd Y_t-C\widehat{X}_t\dd t$ is the innovations process: the part of each new observation not predicted by the current filter. After normalization by the observation covariance it is a Brownian motion in $(\mathcal{Y}_t)$.

\bridge{The innovation is a martingale by construction. Since $C\widehat{X}_t=\E[CX_t\mid\mathcal{Y}_t]$, for $s<t$
\[
\E[\nu_t-\nu_s\mid\mathcal{Y}_s]=\E\!\Big[\int_s^t\big(CX_u-C\widehat{X}_u\big)\dd u+\int_s^t D\dd V_u\,\Big|\,\mathcal{Y}_s\Big]=0,
\]
because $\E[CX_u-C\widehat{X}_u\mid\mathcal{Y}_s]=0$ for $u\ge s$ by the tower property and the observation noise is mean-zero and independent. Filtering is thus the recursive conversion of observations into a conditional law, and innovations are exactly the martingale surprises of \S\ref{sec:martingales} left after the best forecast is removed.}

\begin{outhread}[the canonical linear--Gaussian signal]\label{ou:filter}
Taking the signal to be OU, $A=-\theta$ and $B=\sigma$ in Theorem~\ref{thm:kalman}: the mean-reverting process \emph{is} the textbook Kalman--Bucy signal, and the scalar Riccati equation \eqref{eq:kbric} has a stationary solution $\Sigma_\infty$ giving the steady-state filter gain. The same Riccati equation reappears, transposed, in control (\S\ref{sec:control}).
\end{outhread}

\section{Pricing the future: martingale measures}\label{sec:pricing}

In financial markets the relevant future is the discounted value of payoffs. Absence of arbitrage forces the existence of measures under which discounted prices are martingales.

\begin{theorem}[Fundamental theorem of asset pricing]\label{thm:ftap}
Let $S$ be a locally bounded semimartingale price process. Then $S$ satisfies no free lunch with vanishing risk if and only if there exists a probability measure $\QQ\sim\PP$ under which the discounted price process is a local martingale.
\end{theorem}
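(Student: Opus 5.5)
We prove the two directions separately: the easy one by a direct supermartingale argument, the hard one via Hahn--Banach separation together with Komlós-type compactness, following Delbaen--Schachermayer. Write $S$ for the discounted price process. Let
\[
K=\Big\{\textstyle\int_0^\infty H\dd S : H \text{ admissible}\Big\},\qquad C=(K-L^0_+)\cap L^\infty .
\]
Here \emph{admissible} means predictable, $S$-integrable, with $\int_0^t H\dd S\ge -a$ for some constant $a$. NFLVR is the condition $\overline{C}\cap L^\infty_+=\{0\}$, where the closure is taken in the norm topology of $L^\infty$.

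\emph{Sufficiency} (a local martingale measure exists $\Rightarrow$ NFLVR). Suppose $\QQ\sim\PP$ and $S$ is a $\QQ$-local martingale. For admissible $H$, the stochastic integral $\int H\dd S$ is a $\QQ$-local martingale bounded below. By Fatou's lemma it is a $\QQ$-supermartingale, so $\E_\QQ[f]\le 0$ for every $f\in C$. This inequality passes to the $L^\infty$-norm closure because $\QQ$ is a probability measure. If $f\in\overline{C}\cap L^\infty_+$, then $f\ge0$ and $\E_\QQ[f]\le0$, so $f=0$ $\QQ$-a.s. Since $\QQ\sim\PP$, also $f=0$ $\PP$-a.s. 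Local boundedness of $S$ is not even needed for this direction.

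\emph{Necessity} (NFLVR $\Rightarrow$ a local martingale measure exists). This is the substantive direction, and we plan it in three steps.

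\textbf{Step 1 (weak-$*$ closedness).} Show that NFLVR implies $C$ is weak-$*$ closed, i.e.\ $\sigma(L^\infty,L^1)$-closed. By Krein--Šmulian it suffices to check that $C\cap\{\norm{f}_\infty\le1\}$ is closed under a.s.\ convergence. NFLVR implies NA together with $L^0$-boundedness of the set of terminal values $\{\int_0^\infty H\dd S : H \text{ 1-admissible}\}$. Take $f_n\in C$ with $\norm{f_n}_\infty\le1$ and $f_n\to f$ a.s., say $f_n\le g_n=\int H^n\dd S$. Apply a Komlós-type lemma, passing to forward convex combinations of the $g_n$, to obtain a.s.\ convergence of the integrals. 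Then one must show that the limit is again dominated by a stochastic integral $\int H\dd S$ of some admissible $H$. \textbf{This is the main obstacle.} It requires the closedness of the set of stochastic integrals under these convex limits, which is where local boundedness of $S$ enters. The route is to use the Émery/semimartingale topology, or Delbaen--Schachermayer's decomposition of the integrands into parts where $S$ behaves like a martingale plus a controlled drift, and to prove that the maximal functions of the $\int H^n\dd S$ remain bounded in probability.

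\textbf{Step 2 (separation).} With $C$ weak-$*$ closed and $C\cap L^\infty_+=\{0\}$, apply Hahn--Banach in the dual pair $(L^\infty,L^1)$. For each nonzero $A$ with $\PP(A)>0$, separating $\mathbf 1_A$ from $C$ gives $g_A\in L^1_+$ with $\E[g_A f]\le0$ on $C$ and $\E[g_A\mathbf 1_A]>0$. An exhaustion (Halmos--Savage) argument then yields a countable convex combination $g$ that is strictly positive a.s. Set $\dd\QQ=g\,\dd\PP/\E[g]$. Then $\QQ\sim\PP$ and $\E_\QQ[f]\le0$ for all $f\in C$.

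\textbf{Step 3 (local martingale property).} Localize $S$ by stopping times $\tau_n$ such that $|S^{\tau_n}|\le n$. For $s<t$, $B\in\mathcal F_s$ and $H=\pm\mathbf 1_B\mathbf 1_{(s,t]}$, the integrand $H\mathbf 1_{[0,\tau_n]}$ is admissible. Its integral $\pm\mathbf 1_B(S_{t\wedge\tau_n}-S_{s\wedge\tau_n})$ is bounded, so it lies in $C$. Both signs give $\E_\QQ[\cdot]\le0$, hence $\E_\QQ[\mathbf 1_B(S^{\tau_n}_t-S^{\tau_n}_s)]=0$. Thus each $S^{\tau_n}$ is a $\QQ$-martingale, and $S$ is a $\QQ$-local martingale, which completes the proof.
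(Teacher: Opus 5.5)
The paper gives no proof of this theorem: it states it in representative form and defers to Delbaen--Schachermayer, and your outline follows their route. Your sufficiency direction, the separation-plus-exhaustion argument in Step~2, and the localization in Step~3 are correct. One justification is missing on the easy side. For a $\QQ$-local martingale $S$ and an arbitrary predictable $S$-integrable $H$, the integral $\int H\dd S$ need not be a local martingale (\'Emery's example: a bounded martingale with a single jump and $H_t=1/t$). The Ansel--Stricker lemma is what turns the lower bound $-a$ into the local-martingale property; only then does your Fatou argument give a supermartingale. Similarly, in Step~1 you need NA to conclude that $\int_0^\infty H^n\dd S\ge-1$ forces $H^n$ to be $1$-admissible.

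The genuine gap is Step~1, which carries the entire difficulty of the theorem. You flag it but do not resolve it, and the ingredients you list would not resolve it. Write $H\cdot S$ for the integral process. Koml\'os-type convex combinations give a.s.\ convergence of the \emph{terminal values} $\tilde g_n\to g_0$, and the $L^0$-boundedness you derive from NFLVR does bound the maximal functions in probability. Neither makes the processes $\tilde H^n\cdot S$ converge: integrands with nearly equal terminal values can have paths that separate at intermediate times, so there is no candidate limit integrand for the semimartingale topology. Note also that the target is not closedness of the set of integrals but of $\{\int_0^\infty H\dd S\}-L^0_+$ under such limits. So $g_0$ only has to be \emph{dominated} by an integral, and the proof uses exactly this slack.

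The missing idea is Delbaen--Schachermayer's passage to \emph{maximal elements}. Because the $L^0$-closure of the $1$-admissible terminal values is bounded in $L^0$, $g_0$ is dominated by a maximal element $h_0$ of that closure. If $(H^n\cdot S)_\infty\to h_0$ with $h_0$ maximal, then the processes $H^n\cdot S$ are Cauchy for uniform convergence in probability (ucp). Otherwise, follow $H^n$ until the first time $H^n\cdot S$ exceeds $H^m\cdot S$ by some $\alpha>0$, then switch to $H^m$. This is a $1$-admissible strategy that locks in $\alpha$ on a set of probability bounded away from zero. After convex combinations its terminal values converge to an element of the closure that dominates $h_0$ and differs from it, contradicting maximality. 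This is where local boundedness enters the closure argument. Combined with the uniform lower bound and a localization, it lets you pass (after further convex combinations if necessary) from ucp convergence to convergence in \'Emery's semimartingale topology. M\'emin's theorem, that $\{H\cdot S\}$ is closed in that topology, then yields an admissible $H$ with $f\le g_0\le h_0\le\int_0^\infty H\dd S$. Without the maximality step your sketch produces no integrand. As written, the necessity direction is therefore a correct reduction (closedness $\Rightarrow$ separation $\Rightarrow$ local martingale measure) resting on an unproved closure theorem.
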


Under such a $\QQ$, the no-arbitrage value of a claim paying $V_T$ at maturity $T$ is
\begin{equation}\label{eq:price}
V_t=\theta_t^{-1}\,\E^{\QQ}\!\big[\theta_T V_T\mid\mathcal{F}_t\big],\qquad \theta_t=\exp\!\Big(-\!\int_0^t r_s\dd s\Big),
\end{equation}
so the discounted value $\theta_t V_t$ is a $\QQ$-martingale. The pricing operator is a conditional expectation, but under a risk-neutral measure rather than the physical one. In complete markets this measure is unique; in incomplete markets the equivalent martingale measure need not be unique, so \eqref{eq:price} is the price associated with the chosen pricing measure while no-arbitrage alone may determine only a range of admissible prices. In Markov diffusion models \eqref{eq:price} becomes a Feynman--Kac representation \eqref{eq:fk} and hence a backward pricing PDE \citep{DelbaenSchachermayer2006,KaratzasShreve1991,Duffie2001,Bjork2009}.

\begin{remark}[Physical versus pricing futures]
The physical measure $\PP$ describes empirical frequencies; the pricing measure $\QQ$ encodes no-arbitrage valuation. A physical forecast asks what is likely to happen; a no-arbitrage price asks which present value prevents a self-financing strategy from generating arbitrage. The two coincide only under special preferences or market-price-of-risk structures, and the gap between them is itself a modeling object (\S\ref{sec:estimation}).
\end{remark}

\begin{outhread}[OU as Vasicek: Feynman--Kac becomes an affine term structure]\label{ou:pricing}
Let the short rate follow OU under $\QQ$, $\dd r_t=\theta(\mu-r_t)\dd t+\sigma\dd W_t$. By \eqref{eq:price} the zero-coupon bond price $P(t,T)=\E^{\QQ}[\exp(-\int_t^T r_u\dd u)\mid\mathcal{F}_t]$ solves the Feynman--Kac PDE
\[
\partial_t P+\theta(\mu-r)\partial_r P+\tfrac{\sigma^2}{2}\partial_{rr}P-rP=0,\qquad P(T,T)=1.
\]
The affine ansatz $P(t,T)=\exp\!\big(\mathfrak{a}(t,T)-\mathfrak{b}(t,T)\,r_t\big)$ separates it into ODEs with the closed-form Vasicek solution
\begin{align*}
\mathfrak{b}(t,T)&=\frac{1-\e^{-\theta(T-t)}}{\theta},\\
\mathfrak{a}(t,T)&=\Big(\mu-\frac{\sigma^2}{2\theta^2}\Big)\big(\mathfrak{b}(t,T)-(T-t)\big)-\frac{\sigma^2}{4\theta}\,\mathfrak{b}(t,T)^2,
\end{align*}
\citep{Vasicek1977,Bjork2009}. The abstract pricing identity \eqref{eq:price} has become an explicit yield curve, and the operator/spectral view of long-horizon discounting is the subject of \citet{HansenScheinkman2009}.
\end{outhread}

\begin{cirthread}[CIR as a positive short rate: a genuine Riccati term structure]\label{cir:pricing}
Let the short rate follow CIR under $\QQ$, $\dd r_t=\theta(\mu-r_t)\dd t+\sigma\sqrt{r_t}\,\dd W_t$, which (unlike Vasicek) keeps rates nonnegative. The bond price $P(t,T)=\E^{\QQ}[\exp(-\int_t^T r_u\dd u)\mid\mathcal{F}_t]$ solves
\[
\partial_t P+\theta(\mu-r)\partial_r P+\tfrac{\sigma^2}{2}r\,\partial_{rr}P-rP=0,\qquad P(T,T)=1,
\]
and the affine ansatz $P=\exp(\mathfrak{a}(\tau)-\mathfrak{b}(\tau)r)$, $\tau=T-t$, now separates into
\begin{equation}\label{eq:cirriccati}
\mathfrak{b}'=1-\theta\mathfrak{b}-\tfrac{1}{2}\sigma^2\mathfrak{b}^2,\qquad \mathfrak{a}'=-\theta\mu\,\mathfrak{b},
\end{equation}
where the $\mathfrak{b}$-equation is a \emph{genuine Riccati} equation --- the quadratic term is the trace of the $\sqrt{r}$ diffusion --- in contrast to the \emph{linear} Vasicek $\mathfrak{b}$-equation. With $\gamma=\sqrt{\theta^2+2\sigma^2}$ its solution is, in closed form \citep{CIR1985},
\[
\mathfrak{b}(\tau)=\frac{2(\e^{\gamma\tau}-1)}{(\gamma+\theta)(\e^{\gamma\tau}-1)+2\gamma},\qquad
\mathfrak{a}(\tau)=\frac{2\theta\mu}{\sigma^2}\,\log\!\frac{2\gamma\,\e^{(\gamma+\theta)\tau/2}}{(\gamma+\theta)(\e^{\gamma\tau}-1)+2\gamma}.
\]
The Riccati structure that appeared in filtering \eqref{eq:kbric} and control (Thread~\ref{ou:control}) thus reappears a third time, here in the term structure itself.
\end{cirthread}

\begin{table}[tbp]
\centering
\renewcommand{\arraystretch}{1.32}
\begin{tabular}{@{}L{0.26\textwidth} L{0.34\textwidth} L{0.34\textwidth}@{}}
\toprule
\textbf{Object} & \textbf{Ornstein--Uhlenbeck} & \textbf{Cox--Ingersoll--Ross}\\
\midrule
SDE & $\dd X=\theta(\mu-X)\dd t+\sigma\dd W$ & $\dd X=\theta(\mu-X)\dd t+\sigma\sqrt{X}\,\dd W$\\
State space & $\R$ & $(0,\infty)$\\
Generator & $\theta(\mu-x)\partial_x+\tfrac12\sigma^2\partial_{xx}$ & $\theta(\mu-x)\partial_x+\tfrac12\sigma^2x\,\partial_{xx}$\\
Conditional mean & $\mu+(x-\mu)\e^{-\theta t}$ & $\mu+(x-\mu)\e^{-\theta t}$ (same)\\
Transition law & Gaussian & noncentral $\chi^2$\\
Invariant law & $\mathcal{N}(\mu,\sigma^2/2\theta)$ & $\mathrm{Gamma}(2\theta\mu/\sigma^2,\,2\theta/\sigma^2)$\\
Eigenfunctions & Hermite $H_n$ & Laguerre $L_n^{(\alpha-1)}$\\
Eigenvalues & $-n\theta$ & $-n\theta$ (same)\\
Boundary at $0$ & not applicable & accessible iff $2\theta\mu<\sigma^2$ (Feller)\\
Zero-coupon bond & affine, linear $\mathfrak{b}$-ODE (Vasicek) & affine, Riccati $\mathfrak{b}$-ODE (CIR)\\
\bottomrule
\end{tabular}
\caption{Two complete threads on the same spine. The shared drift gives identical mean reversion and identical relaxation rates $-n\theta$; the additive-versus-multiplicative noise gives Gaussian-versus-Gamma stationarity, Hermite-versus-Laguerre spectra, an unconstrained-versus-half-line state space, and a linear-versus-Riccati term structure. The framework is visibly not Gaussian-specific.}
\label{tab:oucir}
\end{table}

\begin{figure}[htbp]
\centering
\includegraphics[width=\textwidth]{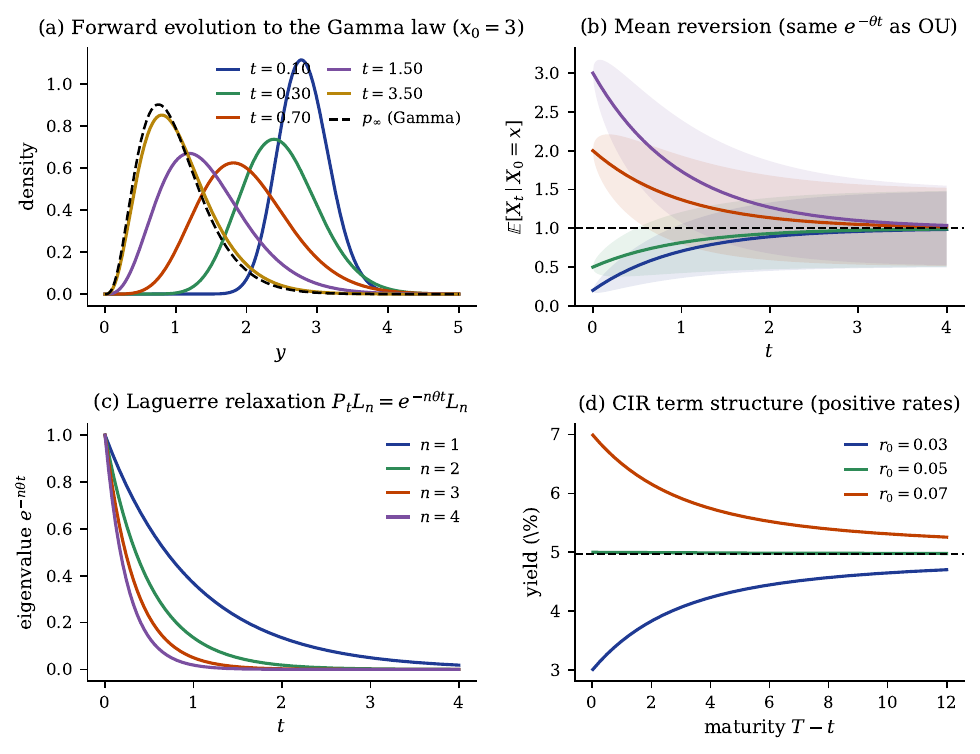}
\caption{The Cox--Ingersoll--Ross thread, the non-Gaussian counterpart of Figure~\ref{fig:ou}, computed from closed forms ($\theta=1$, $\mu=1$, $\sigma=0.7$, Feller condition $2\theta\mu\ge\sigma^2$ satisfied; panel (d) uses short-rate parameters $\theta=0.6,\mu=0.05,\sigma=0.06$). (a) The conditional density (noncentral $\chi^2$) evolves forward to the \emph{Gamma} invariant law, skewed and supported on $(0,\infty)$. (b) The conditional mean reverts at rate $\theta$ exactly as for OU, while the conditional variance is state-dependent. (c) The Laguerre eigenvalues $\e^{-n\theta t}$ give the same relaxation rates as OU. (d) The implied CIR term structure stays positive; its $\mathfrak{b}$-factor solves a genuine Riccati equation.}
\label{fig:cir}
\end{figure}

\section{Control and decision: choosing among futures}\label{sec:control}

Forecasting estimates the future; control changes it. A controlled Markov process has dynamics depending on an admissible action $a_t\in U$:
\begin{equation}\label{eq:controlsde}
\dd X_t=b(X_t,a_t)\dd t+\sigma(X_t,a_t)\dd W_t.
\end{equation}
Given a running cost $c$ and terminal cost $g$, the value function is
\begin{equation}\label{eq:value}
V(t,x)=\inf_{a\in\mathcal{A}}\E_{t,x}\!\Big[\int_t^T c(X_s,a_s)\dd s+g(X_T)\Big].
\end{equation}

\bridge{The dynamic programming principle is the controlled tower property. For small $h$,
\[
V(t,x)=\inf_{a}\E_{t,x}\!\Big[\int_t^{t+h}c(X_s,a_s)\dd s+V(t+h,X_{t+h})\Big],
\]
which is \eqref{eq:value} with the inner conditional expectation collapsed by the tower property of Proposition~\ref{prop:tower}; expanding $V(t+h,X_{t+h})$ by It\^o and letting $h\downarrow0$ gives the Hamilton--Jacobi--Bellman equation.}

\begin{theorem}[HJB equation]\label{thm:hjb}
Under regularity, $V$ solves
\begin{equation}\label{eq:hjb}
-\partial_t V(t,x)=\inf_{a\in U}\big\{c(x,a)+\mathcal{A}^a V(t,x)\big\},\qquad V(T,x)=g(x),
\end{equation}
where $\mathcal{A}^a$ is the generator under action $a$ \citep{FlemingSoner2006,YongZhou1999}.
\end{theorem}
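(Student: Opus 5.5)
The plan is to prove the HJB equation \eqref{eq:hjb} in the classical verification-plus-necessity form, assuming under ``regularity'' that $V\in C^{1,2}([0,T)\times\R^d)$, continuous up to $t=T$, with polynomial growth. I also assume the coefficients $b(\cdot,a),\sigma(\cdot,a)$ are Lipschitz uniformly in $a\in U$, so that Theorem~\ref{thm:sde} gives a unique strong solution of \eqref{eq:controlsde} for every admissible control. The starting point is the dynamic programming principle stated in the Bridge above. I take it as given, derived from Proposition~\ref{prop:tower} by conditioning on $\mathcal{F}_{t+h}$ and concatenating controls.

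The first step shows that $-\partial_t V \le c(x,a)+\mathcal{A}^aV$ for each fixed $a\in U$. Use the constant control $a_s\equiv a$ on $[t,t+h]$ and then an arbitrary continuation. The DPP gives $V(t,x)\le \E_{t,x}[\int_t^{t+h}c(X_s,a)\dd s+V(t+h,X_{t+h})]$. Apply It\^o's formula \eqref{eq:ito} to $s\mapsto V(s,X_s)$ with the time-dependent generator $\partial_s+\mathcal{A}^a$. The stochastic integral has zero mean after localization by exit times from balls; the growth bounds and moment estimates for the SDE let me remove the localization. This yields
\[
0\le \E_{t,x}\!\Big[\int_t^{t+h}\big(c(X_s,a)+\partial_sV(s,X_s)+\mathcal{A}^aV(s,X_s)\big)\dd s\Big].
\]
Dividing by $h$ and letting $h\downarrow0$, continuity of the integrand and path continuity give $0\le c(x,a)+\partial_tV+\mathcal{A}^aV$. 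Taking the infimum over $a$ gives one inequality.

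The second step proves the reverse inequality, and I expect it to be the main obstacle. For $\varepsilon>0$, the DPP provides an $\varepsilon h$-optimal control $a^{\varepsilon}$ on $[t,t+h]$. The same It\^o expansion gives
\[
\varepsilon h\ge \E\!\Big[\int_t^{t+h}\big(c(X_s,a_s^\varepsilon)+(\partial_s+\mathcal{A}^{a^\varepsilon_s})V(s,X_s)\big)\dd s\Big]\ge \E\!\Big[\int_t^{t+h}\big(\partial_sV+H(s,X_s)\big)\dd s\Big],
\]
where $H(s,y)=\inf_{a}\{c(y,a)+\mathcal{A}^aV(s,y)\}$. The difficulty is that the control varies in time, so pointwise continuity in $a$ alone does not suffice. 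I would assume $U$ is compact, or that $c$ and the coefficients are continuous in $(x,a)$ uniformly in $a$. Then $H$ is continuous in $(s,y)$, and I can divide by $h$, let $h\downarrow0$ and then $\varepsilon\downarrow0$. This gives $0\ge \partial_tV+H$, so together with the first step the equation holds with equality.

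The terminal condition $V(T,x)=g(x)$ follows directly from \eqref{eq:value}, since the running-cost integral vanishes at $t=T$. As a closing remark, I would add the converse verification argument: if a classical solution $w$ of \eqref{eq:hjb} exists with a measurable minimizer $a^*(t,x)$, apply It\^o to $w(s,X_s)$. This shows $w\le$ cost for every control, with equality along $a^*$, so $w=V$. Where $V$ fails to be smooth, the same computation performed with smooth test functions touching $V$ from above and below yields the viscosity-solution statement of \citet{FlemingSoner2006}.
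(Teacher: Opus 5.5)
Your proposal is correct and follows the paper's own route: the dynamic programming principle read as the controlled tower property, an It\^o expansion of $V(t+h,X_{t+h})$, and $h\downarrow0$, with the paper's one-line passage to the limit made rigorous by your split into the constant-control inequality and the $\varepsilon h$-optimal-control inequality. In the second inequality the near-optimal control depends on $h$, so besides continuity of $H$ you also need the supremum over admissible controls of $\E\sup_{s\in[t,t+h]}|X_s-x|^2$ to vanish as $h\downarrow0$; this follows from linear growth of $b(\cdot,a),\sigma(\cdot,a)$ uniformly in $a$, which is automatic for compact $U$ with coefficients continuous in $a$ but is not implied by your second alternative alone (e.g.\ for the paper's LQR drift $-\theta x+a$ with $U=\R$).
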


Thus the mathematics of the future has two modes: prediction propagates laws, while control optimizes over possible law propagations.

\begin{outhread}[controlled OU is the linear--quadratic regulator, dual to the filter]\label{ou:control}
Take controlled OU $\dd X_t=(-\theta X_t+a_t)\dd t+\sigma\dd W_t$ with quadratic cost $\int_0^T(qX_s^2+\rho a_s^2)\dd s+g\,X_T^2$. The HJB equation \eqref{eq:hjb} is solved by a quadratic value $V(t,x)=\Pi_t x^2+\text{const}$, optimal linear feedback $a_t^\star=-\rho^{-1}\Pi_t X_t$, and a control Riccati equation
\[
-\dot{\Pi}_t=-2\theta\,\Pi_t+q-\rho^{-1}\Pi_t^2,\qquad \Pi_T=g.
\]
This is the same Riccati structure as the Kalman covariance \eqref{eq:kbric}, transposed and run backward in time: estimation and control are dual linear--quadratic--Gaussian problems, and the OU thread exhibits the two Riccati equations as dual under the standard linear--quadratic--Gaussian transposition/time-reversal correspondence.
\end{outhread}

\section{From model to data: estimation and model risk}\label{sec:estimation}

Every object above has been assumed \emph{known}. In practice the relevant information set, the generator coefficients $(b,\sigma)$, the transition kernel $p_t$, the invariant law $\rho_\infty$, and the change of measure linking $\PP$ to $\QQ$ are all estimated from finite data, and a forecast inherits the error of whatever was estimated.

Three estimation channels recur. (i) \emph{Drift and diffusion estimation}: from discretely sampled paths one estimates $(b,\sigma)$ parametrically or nonparametrically, with a discretization bias that vanishes only as the sampling interval shrinks; the estimated generator $\widehat{\mathcal{A}}$ then drives every downstream forecast. (ii) \emph{Spectral / operator estimation}: because the long-horizon behavior of $P_t$ is controlled by the leading spectrum of $\mathcal{A}$ (Threads~\ref{ou:semigroup},~\ref{ou:gen}), one can estimate eigenvalues and eigenfunctions of the transition operator directly from data, the operator approach to long-term risk of \citet{HansenScheinkman2009}; this is also where data-driven spectral methods meet the semigroup picture. (iii) \emph{Two-measure calibration}: in finance $\QQ$ is calibrated to prices while $\PP$ is estimated from history, and the wedge between them, the market price of risk, is itself an estimated and possibly misspecified quantity.

\begin{proposition}[Generator-error propagation bound]\label{prop:generatorrisk}
Let $P_t=\e^{t\mathcal{A}}$ and $\widehat P_t=\e^{t\widehat{\mathcal{A}}}$ be contraction semigroups on a Banach space $\mathcal{B}$. Suppose, in the bounded-perturbation sense relevant for the chosen core, that $\Delta:=\widehat{\mathcal{A}}-\mathcal{A}$ is bounded with $\norm{\Delta}\le\varepsilon$. Then, for every observable $f\in\mathcal{B}$ and horizon $t\ge0$,
\begin{equation}\label{eq:genrisk}
\norm{\widehat P_t f-P_t f}\le t\varepsilon\norm{f}.
\end{equation}
More generally, if $\norm{\Delta P_s f}\le\varepsilon_f(s)$ for $0\le s\le t$, then
\begin{equation}\label{eq:duhamelbound}
\norm{\widehat P_t f-P_t f}\le\int_0^t \varepsilon_f(s)\dd s.
\end{equation}
\end{proposition}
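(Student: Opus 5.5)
The plan is the Duhamel (variation-of-constants) formula. I would prove it for $f\in\mathcal{D}(\mathcal{A})$ first and then extend to all of $\mathcal{B}$ by density. Fix $t>0$ and $f\in\mathcal{D}(\mathcal{A})$. Since $\Delta$ is bounded, $\mathcal{D}(\widehat{\mathcal{A}})=\mathcal{D}(\mathcal{A})$. Define
\[
\phi(s)=\widehat P_{t-s}P_sf,\qquad 0\le s\le t.
\]
By \eqref{eq:gencommute} of Theorem~\ref{thm:hilleyosida}, $P_sf\in\mathcal{D}(\mathcal{A})$ with $\tfrac{\dd}{\dd s}P_sf=\mathcal{A}P_sf$. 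Likewise $\tfrac{\dd}{\dd s}\widehat P_{t-s}g=-\widehat P_{t-s}\widehat{\mathcal{A}}g$ for $g\in\mathcal{D}(\widehat{\mathcal{A}})$.

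The product rule then gives
\[
\phi'(s)=\widehat P_{t-s}\big(\mathcal{A}-\widehat{\mathcal{A}}\big)P_sf=-\widehat P_{t-s}\Delta P_sf.
\]
Integrating from $0$ to $t$ yields the Duhamel identity
\[
\widehat P_tf-P_tf=\phi(0)-\phi(t)=\int_0^t\widehat P_{t-s}\Delta P_sf\dd s .
\]

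From the identity, contractivity of $\widehat P_{t-s}$ gives $\norm{\widehat P_tf-P_tf}\le\int_0^t\norm{\Delta P_sf}\dd s$. This is \eqref{eq:duhamelbound} under the hypothesis $\norm{\Delta P_sf}\le\varepsilon_f(s)$. For \eqref{eq:genrisk}, use $\norm{\Delta P_sf}\le\varepsilon\norm{P_sf}\le\varepsilon\norm{f}$, again by contractivity, and integrate to get $t\varepsilon\norm{f}$.

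The extension to general $f\in\mathcal{B}$ is routine. Both sides of \eqref{eq:genrisk} are continuous in $f$ because $P_t$ and $\widehat P_t$ are bounded, and $\mathcal{D}(\mathcal{A})$ is dense. The same argument gives \eqref{eq:duhamelbound}, since the Duhamel identity itself extends by continuity: the integrand $s\mapsto\widehat P_{t-s}\Delta P_sf$ is strongly continuous and bounded by $\norm{\Delta}\norm{f}$. In the general statement I read the hypothesis as holding for the given $f$, with the integral understood as a Bochner (or Riemann) integral.

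The main obstacle is justifying the product-rule differentiation of $\phi$, since $\widehat P_{t-s}$ varies with $s$ at the same time as its argument. I would handle this with the standard difference-quotient split. Write
\[
\phi(s+h)-\phi(s)=\widehat P_{t-s-h}\big(P_{s+h}f-P_sf\big)+\big(\widehat P_{t-s-h}-\widehat P_{t-s}\big)P_sf .
\]
For the first term, use uniform boundedness (contractivity) of $\widehat P$ together with strong continuity to pass to the limit. For the second, use $P_sf\in\mathcal{D}(\widehat{\mathcal{A}})$. This is exactly where the domain equality coming from boundedness of $\Delta$ is used.
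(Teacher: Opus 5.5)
Your proposal is correct and is essentially the paper's argument: the paper cites Duhamel's formula $\widehat P_tf-P_tf=\int_0^t\widehat P_{t-s}(\widehat{\mathcal{A}}-\mathcal{A})P_sf\dd s$ and then uses contractivity, exactly as you do. Your differentiation of $\phi(s)=\widehat P_{t-s}P_sf$ on $\mathcal{D}(\mathcal{A})$ and your density extension to all of $\mathcal{B}$ supply details the paper leaves implicit.
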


\begin{proof}
Duhamel's formula gives
\[
\widehat P_t f-P_t f=\int_0^t \widehat P_{t-s}(\widehat{\mathcal{A}}-\mathcal{A})P_s f\dd s.
\]
Taking norms and using contractivity yields \eqref{eq:duhamelbound}; the bounded perturbation case gives \eqref{eq:genrisk}. Thus generator error is controlled no faster than linearly in horizon under a bounded perturbation, before any spectral or nonlinear amplification is considered.
\end{proof}

\begin{remark}[The identities are exact; the inputs are not]
The bridges of this paper hold for the \emph{true} generator. A forecast built on $\widehat{\mathcal{A}}$ carries both variance, from finite samples, and bias, from misspecified dynamics, an unobserved state, or the wrong filtration. The far-future regime is the most fragile: ergodic averages and large-deviation rates depend on the spectral gap and the tails of $\rho_\infty$, so small perturbations of the estimated generator can produce large long-horizon errors (Figure~\ref{fig:risk}). Quantifying this gap between the assumed and the realized law-propagation mechanism is the proper subject of model risk, and it sits outside the classical identities rather than being implied by them.
\end{remark}

\begin{figure}[htbp]
\centering
\includegraphics[width=\textwidth]{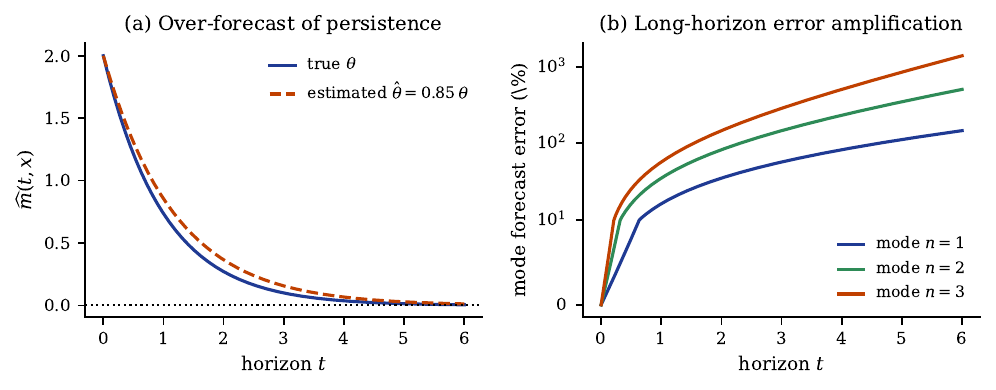}
\caption{Model risk in the OU thread: a $15\%$ \emph{under}-estimate of the mean-reversion rate ($\widehat{\theta}=0.85\,\theta$), with everything else exact. (a) The conditional-mean forecast reverts too slowly, over-stating persistence. (b) The error of the $n$th spectral-mode forecast, $\e^{-n\widehat{\theta}t}/\e^{-n\theta t}-1$, grows with the horizon (symmetric-log scale) and is worse for higher modes: the identities are exact for the true generator, but a forecast inherits the estimate's error, amplified at long horizons.}
\label{fig:risk}
\end{figure}

\subsection{A worked estimator: exact likelihood for the OU thread}\label{sec:ouestimation}
The OU thread closes the loop concretely. Sampled at a fixed step $\Delta$, the OU process \eqref{eq:ousde} is \emph{exactly} a Gaussian first-order autoregression,
\begin{equation}\label{eq:ouar1}
X_{k\Delta}=\mu+\big(X_{(k-1)\Delta}-\mu\big)\rho+\varepsilon_k,\qquad \rho=\e^{-\theta\Delta},\quad \varepsilon_k\sim\mathcal{N}\!\big(0,\,v(\Delta)\big),
\end{equation}
with $v(\Delta)=\sigma^2(1-\rho^2)/2\theta$ and independent innovations, because the transition law of Thread~\ref{ou:semigroup} is Gaussian for every $\Delta$. The exact log-likelihood of a path $X_0,X_\Delta,\dots,X_{N\Delta}$ is therefore the AR(1) likelihood, and maximizing it gives closed-form estimators: $\widehat\rho$ and $\widehat\mu$ from the least-squares regression of $X_{k\Delta}$ on $X_{(k-1)\Delta}$, then
\begin{equation}\label{eq:oumle}
\widehat\theta=-\frac{\log\widehat\rho}{\Delta},\qquad \widehat{\sigma}^2=\frac{2\widehat\theta\,\widehat{v}}{1-\widehat\rho^{\,2}},
\end{equation}
with $\widehat v$ the residual variance. Two lessons generalize. First, OU is the rare diffusion whose discrete-time likelihood is \emph{exact}, so its estimator carries no discretization bias; for a general diffusion the transition density is not available in closed form and one either accepts an Euler pseudo-likelihood with $O(\Delta)$ bias or uses a closed-form likelihood expansion \citep{AitSahalia2002}. The map $\rho\mapsto\theta$ exposes the bias directly: the consistent estimator uses $-\log\widehat\rho/\Delta$, whereas the naive Euler estimator effectively uses $(1-\widehat\rho)/\Delta$, and the two differ by the deterministic factor $(1-\rho)/(-\log\rho)$, which departs from $1$ as $\theta\Delta$ grows. Second, the estimator's precision is set by the Fisher information of \eqref{eq:ouar1}: by the delta method $\Var(\widehat\theta)\approx(\e^{2\theta\Delta}-1)/(N\Delta^2)$, which is $O(1/N)$ at fixed $\Delta$ and tends to the continuous-record variance $2\theta/T$ as $\Delta\downarrow0$ with $N\Delta=T$. The generator coefficients are thus estimable at the parametric rate, but the long-horizon objects built from them remain subject to the amplification of Proposition~\ref{prop:generatorrisk} and Figure~\ref{fig:risk}. Figure~\ref{fig:estimation} shows both effects.

\begin{figure}[htbp]
\centering
\includegraphics[width=\textwidth]{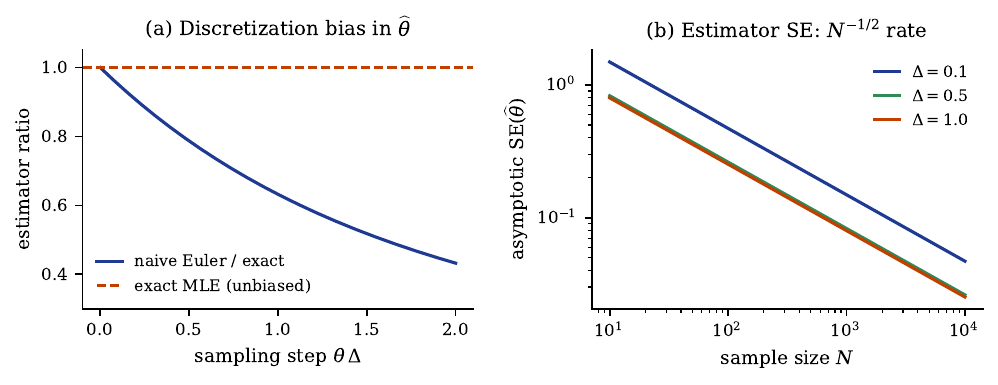}
\caption{Estimation for the OU thread, computed from closed forms (no simulation). (a) Discretization bias of the naive Euler estimator of $\theta$: the factor $(1-\e^{-\theta\Delta})/(\theta\Delta)$ relative to the exact estimator departs from $1$ as the sampling step $\theta\Delta$ grows, while the exact likelihood \eqref{eq:oumle} is free of discretization bias at every $\Delta$. (b) Asymptotic standard error of the exact estimator $\widehat\theta$, $\sqrt{(\e^{2\theta\Delta}-1)/(N\Delta^2)}$, against sample size $N$ for several steps $\Delta$ (log--log), showing the parametric $N^{-1/2}$ rate.}
\label{fig:estimation}
\end{figure}

\section{The far future: ergodicity and large deviations}\label{sec:farfuture}

Two asymptotic regimes describe long horizons: the typical future and the improbable future.

\subsection{The typical long run}
A measure $\mu$ is invariant for $(P_t)$ if $\mu P_t=\mu$ for all $t$, equivalently $\int P_tf\dd\mu=\int f\dd\mu$ for bounded measurable $f$. For a diffusion with invariant density this is the stationary Fokker--Planck equation $\mathcal{A}^*\mu=0$.

\begin{theorem}[Ergodic theorem for Markov processes]\label{thm:ergodic}
Let $X$ be a positive Harris recurrent Markov process with invariant probability measure $\mu$. If $f\in L^1(\mu)$, then for every initial distribution absolutely continuous with respect to the maximal irreducibility measure, and in particular for $\mu$-almost every initial state,
\begin{equation}\label{eq:ergodic}
\frac{1}{T}\int_0^T f(X_s)\dd s\ \xrightarrow[T\to\infty]{\text{a.s.}}\ \int_E f\dd\mu.
\end{equation}
Under stronger stability assumptions, such as positive Harris recurrence together with the usual irreducibility and aperiodicity conditions on the relevant class, analogous convergence statements hold beyond the stationary initial distribution.
\end{theorem}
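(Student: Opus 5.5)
The plan is to reduce the continuous-time ergodic theorem to Birkhoff's pointwise ergodic theorem for the stationary process, and then transfer the conclusion to non-stationary initial laws using Harris recurrence. I will work on canonical path space $\Omega=E^{[0,\infty)}$ (or the Skorokhod/continuous path space for a diffusion). The time shifts are $\vartheta_s$, and $\PP_\nu=\int \PP_x\,\nu(\dd x)$ denotes the law started from $\nu$. The first step is stationarity. Since $\mu P_t=\mu$, the Markov property \eqref{eq:markov} and Chapman--Kolmogorov \eqref{eq:ck} give that all finite-dimensional distributions of $X$ under $\PP_\mu$ are shift-invariant. Hence $(\vartheta_s)_{s\ge0}$ is a measure-preserving semiflow on $(\Omega,\PP_\mu)$. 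I take $F(\omega)=\int_0^1 f(X_s(\omega))\dd s$. By Fubini and invariance, $\E_\mu|F|\le\int|f|\dd\mu<\infty$. Then $\frac1T\int_0^T f(X_s)\dd s$ differs from $\frac1N\sum_{k<N}F\circ\vartheta_k$, with $N=\lfloor T\rfloor$, by a boundary term of the form $\frac1T\int_N^{T}|f(X_s)|\dd s$. Applying Birkhoff to $|F|$ shows that $\frac1N\sum_{k<N}|F|\circ\vartheta_k$ converges, so its increments divided by $N$ vanish, and the boundary term is negligible. Birkhoff therefore yields a.s.\ convergence to $\E_\mu[F\mid\mathcal{I}]$, where $\mathcal{I}$ is the shift-invariant $\sigma$-algebra.

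The second step is the main obstacle: identifying the limit, i.e.\ showing that $\mathcal{I}$ is $\PP_\mu$-trivial. Here I use Harris recurrence. Let $A\in\mathcal{I}$ and set $h(x)=\PP_x(A)$. By the Markov property and shift invariance, $h(X_t)=\PP_\mu(A\mid\mathcal{F}_t)$ a.s. This is a bounded martingale, so it converges to $\mathbf 1_A$ a.s. Moreover $P_th=h$, so $h$ is bounded harmonic. For a Harris recurrent process, bounded harmonic functions are constant: on a set $B$ with $\psi(B)>0$, the sets $\{h>c\}$ and $\{h<c\}$ cannot both have positive irreducibility measure, since both would be hit infinitely often and $h(X_t)$ would fail to converge. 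This argument goes through the Meyn--Tweedie characterisation, which I would cite rather than reprove. Hence $h\equiv\PP_\mu(A)\in\{0,1\}$, so $\mathcal{I}$ is trivial and the limit is $\int f\dd\mu$ $\PP_\mu$-a.s.

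The third step transfers from $\PP_\mu$ to other initial laws. Let $G$ be the event that \eqref{eq:ergodic} holds. Then $G\in\mathcal{I}$, because the time average is unchanged by a finite shift up to the negligible boundary term. So $g(x)=\PP_x(G)$ is again bounded harmonic, hence constant by the second step. Since $\int g\dd\mu=1$, we get $g\equiv1$ on the Harris set. Consequently $\PP_\nu(G)=1$ for every $\nu$ absolutely continuous with respect to the maximal irreducibility measure, and in particular for $\mu$-a.e.\ starting point. Full Harris recurrence would even give every $x$.

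The closing sentence of the theorem, on convergence beyond stationarity (for example $\norm{\delta_xP_t-\mu}_{TV}\to0$ under aperiodicity and irreducibility), I would treat by a coupling or regeneration argument in the style of Nummelin splitting. I would indicate it only by citation, consistent with the paper's representative style.
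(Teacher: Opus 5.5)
The paper does not prove this theorem. It states it in representative form and defers to Meyn--Tweedie and Da Prato--Zabczyk, so your outline is a genuinely different route, and a sound one. You make $\PP_\mu$ stationary under the shift semiflow, apply Birkhoff, get triviality of the invariant $\sigma$-algebra from constancy of bounded harmonic functions under Harris recurrence, and then transfer to other initial laws. This is the continuous-time analogue of the invariant-$\sigma$-field proof of the law of large numbers for Harris chains. The main alternative in the literature is regeneration (Nummelin splitting of the resolvent or a skeleton chain). It costs more machinery, but it also gives ratio limits in the null-recurrent case and is the usual entry point to CLTs and rates.

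For the conclusion actually stated, your third step is more than you need. For a positive Harris recurrent process, $\mu$ is itself a maximal irreducibility measure. So $\nu\ll\psi$ means $\nu\ll\mu$, hence $\PP_\nu\ll\PP_\mu$ on path space, and the $\PP_\mu$-a.s.\ convergence transfers at once. (For the $\PP_\mu$ part alone, uniqueness of the invariant law already makes $\mu$ extremal and $\PP_\mu$ ergodic.) Your harmonic-function transfer targets the stronger every-$x$ statement. Harris recurrence, not merely a Harris set, does deliver that, so the hedge ``on the Harris set'' is unnecessary.

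Several steps need tightening, though none breaks the argument.
\begin{enumerate}
\item Birkhoff for the single map $\vartheta_1$ gives $\E_\mu[F\mid\mathcal{I}_1]$, where $\mathcal{I}_1$ is the $\vartheta_1$-invariant $\sigma$-algebra. Your second step, however, uses $P_th=h$ for all real $t$, i.e.\ invariance under the whole semiflow. These genuinely differ. Unit-speed rotation on the circle is positive Harris recurrent in the occupation-time sense, with Lebesgue invariant law, yet $P_1=\Id$ and $\mathcal{I}_1$ is not trivial. The repair comes from your own boundary estimate. The limit $L$ satisfies $L\circ\vartheta_s=L$ a.s.\ for each $s$, because the discrepancy is $T^{-1}\big(\int_T^{T+s}-\int_0^s\big)f(X_u)\dd u\to0$. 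So each $\{L>c\}$ is a.s.\ flow-invariant, and after the standard replacement by a strictly invariant set your second step applies.
\item The boundary term is controlled by applying Birkhoff to $\int_0^1|f(X_s)|\dd s$. Applying it to $|F|$ does not work, since $|F|$ does not dominate that term.
\item In the constancy argument, compare separated levels $\{h>c\}$ and $\{h<c'\}$ with $c'<c$, or use that the limit $\mathbf{1}_A$ is $\{0,1\}$-valued. Work with a right-continuous version of $\PP_\mu(A\mid\mathcal{F}_t)$, since Harris recurrence controls occupation times rather than values at fixed times.
\item $G$ is only forward-invariant, $G\subseteq\vartheta_s^{-1}G$, because $\int_0^s|f(X_u)|\dd u$ may be infinite under $\PP_x$ for exceptional $x$. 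So $x\mapsto\PP_x(G)$ is merely subharmonic unless you define $G$ through tail integrals $\int_n^T$.
\end{enumerate}
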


Equation \eqref{eq:ergodic} states that the infinite future forgets the initial condition: time averages converge to spatial averages against the invariant measure, the only forecast that survives at the longest horizons \citep{MeynTweedie2009,DaPratoZabczyk1996}.

\subsection{The improbable future}
Risk analysis also asks how unlikely a given excursion is.

\begin{definition}[Large-deviation principle]
A family $(\mu_\varepsilon)_{\varepsilon>0}$ on a Polish space satisfies a large-deviation principle with speed $\varepsilon^{-1}$ and good rate function $I$ if, for every closed $F$ and open $G$,
\[
\limsup_{\varepsilon\downarrow0}\varepsilon\log\mu_\varepsilon(F)\le-\inf_{x\in F}I(x),\qquad
\liminf_{\varepsilon\downarrow0}\varepsilon\log\mu_\varepsilon(G)\ge-\inf_{x\in G}I(x).
\]
Heuristically $\mu_\varepsilon(\Gamma)\approx\exp(-\varepsilon^{-1}\inf_{x\in\Gamma}I(x))$ for regular $\Gamma$.
\end{definition}

For small-noise diffusions $\dd X^\varepsilon_t=b(X^\varepsilon_t)\dd t+\sqrt{\varepsilon}\,\sigma(X^\varepsilon_t)\dd W_t$, Freidlin--Wentzell theory identifies $I$ as an action functional. Rare futures concentrate around minimizers of this action, so the rate function is a variational description of tail risk: it gives not only the probability of an unlikely event but the most likely path by which it occurs \citep{DemboZeitouni1998,FreidlinWentzell2012}.

\begin{outhread}[invariant law, ergodic average, and the action functional]\label{ou:far}
For OU the invariant measure is the Gaussian \eqref{eq:ouinvariant}, $\mathcal{N}(\mu,\sigma^2/2\theta)$; ergodic averages of any $f\in L^1$ converge to its Gaussian mean, and the $\e^{-\theta t}$ relaxation of Thread~\ref{ou:wasserstein} sets the rate. For small-noise OU $\dd X^\varepsilon_t=\theta(\mu-X^\varepsilon_t)\dd t+\sqrt{\varepsilon}\,\sigma\dd W_t$, the Freidlin--Wentzell action of an absolutely continuous path $\varphi$ on $[0,T]$ with $\varphi_0=x_0$ is
\begin{equation}\label{eq:fwaction}
I_{0,T}(\varphi)=\frac{1}{2\sigma^2}\int_0^T\big|\dot{\varphi}_t-\theta(\mu-\varphi_t)\big|^2\dd t,
\end{equation}
finite exactly for paths absolutely continuous from $x_0$; its minimizers are the most likely excursions, and the induced quasipotential governs the exponential rarity of large deviations from $\mu$.
\end{outhread}

\begin{cirthread}[the non-Gaussian far future]\label{cir:far}
For CIR the invariant law is the Gamma measure \eqref{eq:cirinvariant}; ergodic averages of $f\in L^1(p_\infty)$ converge to its Gamma mean, and the spectral gap $\theta$ again sets the rate (Thread~\ref{cir:kolmogorov}). The small-noise large-deviation rate function takes the same Freidlin--Wentzell form as \eqref{eq:fwaction} with the state-dependent diffusion $\sigma^2\varphi$ in place of $\sigma^2$, so the metric weighting the action degenerates toward the boundary --- the long-run theory inherits the positivity constraint that distinguishes CIR from OU.
\end{cirthread}

\section{Discrete time: the same spine on a lattice}\label{sec:discrete}

Most forecasting and machine-learning practice is in discrete time. The entire spine transfers, with the generator replaced by the one-step transition operator minus the identity and It\^o's formula replaced by the Doob decomposition. Let $(\mathcal{F}_n)_{n\ge0}$ be a filtration and $X$ a Markov chain with transition operator $(Pf)(x)=\E[f(X_{n+1})\mid X_n=x]$.

\begin{proposition}[Discrete Doob decomposition]\label{prop:doob}
Any integrable adapted $X$ admits the unique decomposition $X_n=X_0+A_n+M_n$ with $A_n=\sum_{k=1}^n\E[X_k-X_{k-1}\mid\mathcal{F}_{k-1}]$ predictable, $A_0=0$, and $M_n=\sum_{k=1}^n\big(X_k-X_{k-1}-\E[X_k-X_{k-1}\mid\mathcal{F}_{k-1}]\big)$ a martingale. The increments $\xi_k:=M_k-M_{k-1}$ are martingale differences, $\E[\xi_k\mid\mathcal{F}_{k-1}]=0$.
\end{proposition}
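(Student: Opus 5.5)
The plan is to verify existence directly from the displayed formulas and then prove uniqueness by the standard ``predictable martingale is constant'' argument.

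\emph{Existence.} Since $X$ is integrable and adapted, each increment $X_k-X_{k-1}$ is integrable. Hence $d_k:=\E[X_k-X_{k-1}\mid\mathcal{F}_{k-1}]$ is a well-defined $\mathcal{F}_{k-1}$-measurable integrable random variable. Define $A_n=\sum_{k=1}^n d_k$ and $M_n=\sum_{k=1}^n(X_k-X_{k-1}-d_k)$, with empty sums equal to $0$. Then:
\begin{itemize}
\item The identity $X_n=X_0+A_n+M_n$ follows by telescoping, since $A_n+M_n=\sum_{k=1}^n(X_k-X_{k-1})=X_n-X_0$.
\item $A_n$ is $\mathcal{F}_{n-1}$-measurable for $n\ge1$, because each $d_k$ with $k\le n$ is $\mathcal{F}_{k-1}\subseteq\mathcal{F}_{n-1}$-measurable. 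This is predictability in discrete time, and $A_0=0$.
\item $M_n$ is adapted and integrable. Its increments $\xi_k=X_k-X_{k-1}-d_k$ satisfy $\E[\xi_k\mid\mathcal{F}_{k-1}]=d_k-d_k=0$, because $d_k$ is $\mathcal{F}_{k-1}$-measurable and can be pulled out of the conditional expectation.
\item The martingale property $\E[M_n\mid\mathcal{F}_m]=M_m$ for $m<n$ then follows by summing $\E[\xi_k\mid\mathcal{F}_m]=\E[\E[\xi_k\mid\mathcal{F}_{k-1}]\mid\mathcal{F}_m]=0$ over $k>m$, using the tower property of Proposition~\ref{prop:tower}.
\end{itemize}

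\emph{Uniqueness.} Suppose $X_n=X_0+A_n+M_n=X_0+A'_n+M'_n$ are two such decompositions. Set $N:=M-M'=A'-A$. Then $N$ is simultaneously a martingale and predictable, with $N_0=0$. For each $n\ge1$, predictability gives $N_n=\E[N_n\mid\mathcal{F}_{n-1}]$, and the martingale property gives $\E[N_n\mid\mathcal{F}_{n-1}]=N_{n-1}$. Hence $N_n=N_{n-1}$ a.s., and by induction $N_n=N_0=0$ a.s. for all $n$. Therefore $A=A'$ and $M=M'$ almost surely.

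There is no real obstacle. The only points needing care are bookkeeping ones: stating that ``predictable'' in discrete time means $A_n$ is $\mathcal{F}_{n-1}$-measurable, and noting that integrability of $X$ is exactly what makes each $d_k$ exist. It is this integrability that lets the conditional expectation be pulled through in both the martingale-difference step and the uniqueness step.
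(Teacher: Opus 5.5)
The paper gives no proof of this proposition. It records the discrete Doob decomposition as a classical fact, pointing to \citet{Norris1997} and \citet{CappeMoulinesRyden2005}, and uses it only as the discrete-time entry of Table~\ref{tab:dictionary}. There is therefore no argument of the paper's to compare yours against.

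Your proof is the standard one, and it is correct and complete. Each step is the right one:
\begin{itemize}
\item existence follows by telescoping;
\item predictability of $A$ is correctly identified with $\mathcal{F}_{n-1}$-measurability of $A_n$;
\item the martingale-difference identity follows by pulling the $\mathcal{F}_{k-1}$-measurable $d_k$ out of the conditional expectation;
\item the full martingale property follows from the tower property;
\item uniqueness follows because a predictable martingale with $N_0=0$ is identically zero.
\end{itemize}
Setting $N=A'-A=M-M'$ gives both $N_0=0$ and integrability of $N$ directly, so no extra bookkeeping is needed.

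It is worth seeing why your uniqueness step is so short compared with the continuous-time Doob--Meyer theorem (Theorem~\ref{thm:doobmeyer}). In discrete time, predictability is just $\mathcal{F}_{n-1}$-measurability. So the compensator is given explicitly by one-step conditional drifts, and no class (D) hypothesis is needed.

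Your construction also links back to the paper's dictionary. Apply it to $f(X_n)$ for a Markov chain with transition operator $P$. By the Markov property, your $d_k$ becomes $(P-\Id)f(X_{k-1})$, which is exactly the discrete generator in the table. This is the lattice counterpart of the compensator $\int_0^t\mathcal{A}f(X_s)\dd s$ in the martingale problem \eqref{eq:mgproblem}.
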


This is the exact discrete analogue of It\^o's split \eqref{eq:ito}: predictable drift plus martingale innovation. Table~\ref{tab:dictionary} gives the full dictionary; each row is the same conditional-law statement read in two time scales \citep{Norris1997,CappeMoulinesRyden2005}.

\begin{table}[tbp]
\centering
\renewcommand{\arraystretch}{1.32}
\begin{tabular}{@{}L{0.30\textwidth} L{0.32\textwidth} L{0.30\textwidth}@{}}
\toprule
\textbf{Object} & \textbf{Continuous time} & \textbf{Discrete time}\\
\midrule
Information & filtration $(\mathcal{F}_t)$ & filtration $(\mathcal{F}_n)$\\
Optimal forecast & $\E[\cdot\mid\mathcal{F}_t]$ ($L^2$-projection) & $\E[\cdot\mid\mathcal{F}_n]$\\
Propagation (observables) & semigroup $P_t$ & powers $P^n$\\
Propagation (laws) & $\mu\mapsto\mu P_t$ & $\mu\mapsto\mu P^n$\\
Local dynamics & generator $\mathcal{A}=\lim_{t\downarrow0}(P_t-\Id)/t$ & discrete generator $P-\Id$\\
Backward equation & $\partial_t u=\mathcal{A}u$ & $u_{n+1}-u_n=(P-\Id)u_n$\\
Forward equation & $\partial_t p=\mathcal{A}^*p$ & $\mu_{n+1}-\mu_n=\mu_n(P-\Id)$\\
Pathwise model & SDE \eqref{eq:sde}, It\^o & recursion $X_{n+1}=F(X_n,\xi_{n+1})$\\
Trend $+$ innovation & It\^o split \eqref{eq:ito} & Doob split (Prop.~\ref{prop:doob})\\
Innovation noise & It\^o integral $\int H\dd W$ & martingale differences $\xi_k$\\
Filtering & Kushner--Stratonovich \eqref{eq:ks}; Kalman--Bucy & forward (HMM) recursion; Kalman filter\\
Pricing & $\theta_t V_t$ a $\QQ$-martingale & $V_n=\E^{\QQ}[V_{n+1}/(1+r_n)\mid\mathcal{F}_n]$\\
Control & HJB \eqref{eq:hjb} & Bellman $V_n(x)=\inf_a\{c+\E[V_{n+1}\mid x,a]\}$\\
Typical far future & ergodic theorem \eqref{eq:ergodic} & chain ergodic theorem\\
Improbable far future & Freidlin--Wentzell \eqref{eq:fwaction} & Cram\'er / G\"artner--Ellis\\
\bottomrule
\end{tabular}
\caption{The same spine in continuous and discrete time. Each row is one conditional-law statement; the generator becomes $P-\Id$ and It\^o becomes the Doob decomposition.}
\label{tab:dictionary}
\end{table}

\section{A unified diagram}\label{sec:diagram}

Figure~\ref{fig:spine} draws the spine and its branches. The vertical axis is the core progression of \S\ref{sec:info}--\S\ref{sec:kolmogorov}; the branches are the specializations (innovation, geometry, partial observation, valuation, control, far future) attached at the link where each enters.

\begin{figure}[htbp]
\centering
\begin{tikzpicture}[
  >=stealth,
  node distance=8mm and 14mm,
  core/.style={draw,rounded corners,align=center,fill=blue!6,inner sep=4pt,minimum width=42mm,font=\small},
  branch/.style={draw,rounded corners,align=center,fill=gray!8,inner sep=4pt,minimum width=38mm,font=\footnotesize},
  ar/.style={->,thick,gray!60!black},
  bar/.style={->,thick,gray!55}
]
\node[core] (info)  {Information\\ filtration $(\mathcal{F}_t)$};
\node[core,below=of info] (law) {Conditional law\\ $\Law(X_{t+\cdot}\mid\mathcal{F}_t)$};
\node[core,below=of law] (kernel) {State compression\\ Markov kernel $p_t$};
\node[core,below=of kernel] (semi) {Propagation\\ semigroup $P_t$};
\node[core,below=of semi] (gen) {Local dynamics\\ generator $\mathcal{A}$};
\node[core,below=of gen] (pde) {Dual equations\\ backward / forward};

\draw[ar] (info)--(law);
\draw[ar] (law)--(kernel);
\draw[ar] (kernel)--(semi);
\draw[ar] (semi)--(gen);
\draw[ar] (gen)--(pde);

\node[branch,right=of law] (filter) {Partial observation\\ filtering};
\node[branch,right=of semi] (price) {Valuation\\ martingale measures};
\node[branch,right=of gen] (control) {Decision\\ control / HJB};
\node[branch,right=of pde] (far) {Far future\\ ergodicity, LDP};

\draw[bar] (law)--(filter);
\draw[bar] (semi)--(price);
\draw[bar] (gen)--(control);
\draw[bar] (pde)--(far);

\node[branch,left=of gen] (sde) {Pathwise model\\ SDE and It\^o};
\node[branch,left=of semi] (mart) {Innovation\\ martingales};
\node[branch,left=of pde] (geo) {Geometry\\ Wasserstein flow};

\draw[bar] (gen)--(sde);
\draw[bar] (semi)--(mart);
\draw[bar] (pde)--(geo);
\end{tikzpicture}
\caption{The unified spine (center) and its branches. Conditional expectation projects the conditional law into a point forecast; the Markov property compresses the past into the present; the semigroup and generator propagate it; backward and forward equations are adjoint; and the martingale component is the part of the future no admissible information removes.}
\label{fig:spine}
\end{figure}

\section{Synthesis}\label{sec:synthesis}

The pieces assemble into a single mathematical picture. The following theorem records the common core of the paper in one statement.

\begin{theorem}[The conditional-law spine]\label{thm:conditional-law-spine}
Let $X=(X_t)_{t\ge0}$ be a time-homogeneous Feller--Markov process on a Polish state space $E$, with transition semigroup $(P_t)_{t\ge0}$ and generator $(\mathcal{A},\mathcal{D}(\mathcal{A}))$. For every bounded measurable observable $f$ and every $s,t\ge0$,
\[
\E\!\left[f(X_{t+s})\mid\mathcal{F}_t\right]=(P_s f)(X_t),
\qquad
P_{t+s}=P_tP_s.
\]
If $f\in\mathcal{D}(\mathcal{A})$, then $u(s,x)=P_s f(x)$ satisfies the backward Kolmogorov equation in the semigroup sense,
\[
\partial_s u(s,x)=\mathcal{A}u(s,x),
\qquad
u(0,x)=f(x).
\]
Dually, for every initial law $\mu_0$, the propagated laws $\mu_s=\mu_0P_s$ satisfy the weak forward equation
\[
\frac{\dd}{\dd s}\int_E f(x)\,\mu_s(\dd x)
=
\int_E \mathcal{A}f(x)\,\mu_s(\dd x),
\qquad f\in\mathcal{D}(\mathcal{A}).
\]
When $\mu_s$ admits a density $p_s$ and the formal adjoint $\mathcal{A}^*$ is well defined on that density class, this weak identity is the forward Kolmogorov equation
\[
\partial_s p_s=\mathcal{A}^*p_s.
\]
Thus the conditional law of the future, the Markov semigroup, the generator, and the backward/forward Kolmogorov equations are not separate modeling languages; they are equivalent representations of the same forecasting object.
\end{theorem}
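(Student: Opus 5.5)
The plan is to prove the four assertions in order, each from an earlier result, keeping the analytic content inside the Banach space $C_0(E)$ where the Feller semigroup is strongly continuous.

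\textbf{Markov identity and semigroup law.} The first identity $\E[f(X_{t+s})\mid\mathcal{F}_t]=(P_sf)(X_t)$ is the Markov property \eqref{eq:markov}, taken here as part of the hypothesis that $X$ is Markov with transition semigroup $(P_t)$. I would derive $P_{t+s}=P_tP_s$ exactly as in the Bridge after Thread~\ref{cir:def}. Start from $P_{t+s}f(x)=\E_x[f(X_{t+s})]$. Apply the tower property of Proposition~\ref{prop:tower} to insert conditioning on $\mathcal{F}_t$. Then use \eqref{eq:markov} to replace the inner conditional expectation by $(P_sf)(X_t)$. Taking $\E_x$ gives $P_t(P_sf)(x)$. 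Equivalently, integrating the kernels yields Chapman--Kolmogorov \eqref{eq:ck}. This holds for all bounded measurable $f$, since no continuity is used.

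\textbf{Backward equation.} For $f\in\mathcal{D}(\mathcal{A})$ I would invoke \eqref{eq:gencommute} from Theorem~\ref{thm:hilleyosida}: the map $s\mapsto P_sf$ is differentiable in norm with derivative $\mathcal{A}P_sf=P_s\mathcal{A}f$. This needs only that $P_s$ is a strongly continuous contraction semigroup with generator $\mathcal{A}$, which is what Feller means here. The statement is ``in the semigroup sense'', so the derivative is the $C_0$-norm derivative, and pointwise evaluation at $x$ is a bounded functional, so it commutes with the derivative. Both claims follow, with $u(0,\cdot)=P_0f=f$. If I wanted to prove \eqref{eq:gencommute} directly, I would use the semigroup law: $(P_{s+h}f-P_sf)/h=P_s\,(P_hf-f)/h\to P_s\mathcal{A}f$ by contractivity, and the same quotient equals $(P_hg-g)/h$ with $g=P_sf$. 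This shows $P_sf\in\mathcal{D}(\mathcal{A})$ with $\mathcal{A}P_sf=P_s\mathcal{A}f$. The left derivative follows from strong continuity and uniform boundedness.

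\textbf{Weak forward equation.} Use the duality \eqref{eq:duality}, which gives $\int f\,\dd\mu_s=\int P_sf\,\dd\mu_0$. Differentiate under the integral: the difference quotients of $P_sf$ converge in sup norm, and $\mu_0$ is a probability measure, so the limit passes through the integral. This gives $\int P_s\mathcal{A}f\,\dd\mu_0$. Applying \eqref{eq:duality} again to $\mathcal{A}f\in C_0(E)$ yields $\int\mathcal{A}f\,\dd\mu_s$.

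\textbf{Density form.} When $\mu_s=p_s\,\dd x$ and $\mathcal{A}^*$ is defined on the density class by $\inner{\mathcal{A}f}{g}=\inner{f}{\mathcal{A}^*g}$, as in the Bridge of \S\ref{sec:kolmogorov}, the right side becomes $\int f\,\mathcal{A}^*p_s$. Then $\int f(\partial_sp_s-\mathcal{A}^*p_s)=0$ for all $f$ in a core. The hypothesis ``well defined on that density class'' is what I would use to assume this core is measure-determining, for instance $C_c^\infty$ in the diffusion case. That gives the identity in the distributional sense.

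\textbf{Main obstacle.} The one delicate point is this last step, together with the interchange of $\partial_s$ and $\int$. I would handle it by stating the conclusion distributionally and leaning on the hypotheses built into the statement, rather than proving regularity of $p_s$. The closing sentence of the theorem is interpretive and needs no proof beyond these identities.
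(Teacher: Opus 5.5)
Your proposal is correct and follows the same route as the paper's proof. You use the Markov property together with the tower property for $P_{t+s}=P_tP_s$, the generator limit routed through the semigroup law for the backward equation, differentiation of the duality $\int f\,\dd\mu_s=\int P_sf\,\dd\mu_0$ for the weak forward equation, and integration by parts for the density form. You also supply details the paper leaves implicit: that $P_sf\in\mathcal{D}(\mathcal{A})$ with $\mathcal{A}P_sf=P_s\mathcal{A}f$, the left derivative via contractivity and strong continuity, and the justification for passing the $s$-derivative through $\int\cdot\,\dd\mu_0$.
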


\begin{proof}
The Markov property gives
\[
\E[f(X_{t+s})\mid\mathcal{F}_t]=\E[f(X_{t+s})\mid X_t]=(P_sf)(X_t),
\]
and applying the same identity first over horizon $s$ and then over horizon $t$ gives Chapman--Kolmogorov, equivalently $P_{t+s}=P_tP_s$. By definition of the generator,
\[
\mathcal{A}f=\lim_{h\downarrow0}\frac{P_hf-f}{h},
\]
so for $u(s,\cdot)=P_s f$ the semigroup law gives
\[
\partial_su(s,\cdot)=\lim_{h\downarrow0}\frac{P_{s+h}f-P_sf}{h}
=P_s\mathcal{A}f=\mathcal{A}P_sf,
\]
with the equality interpreted on the appropriate generator domain. Finally,
\[
\int_E f\,\dd\mu_s=\int_E P_sf\,\dd\mu_0,
\]
and differentiating in $s$ yields the weak forward equation. If densities exist, integration by parts identifies the weak equation with $\partial_sp_s=\mathcal{A}^*p_s$.
\end{proof}

\begin{enumerate}\itemsep2pt
\item Information is a filtration $(\mathcal{F}_t)$. The future is the conditional law of the path beyond $t$ given $\mathcal{F}_t$.
\item The optimal point forecast is conditional expectation, the $L^2$-projection onto present information (Theorem~\ref{thm:projection}). The tower property keeps forecasts dynamically consistent and, specialized to the Markov filtration, is the semigroup law itself.
\item The full distributional forecast is a regular conditional law; it contains all conditional expectations of bounded observables.
\item The Markov assumption compresses the relevant past into the present state, producing kernels satisfying Chapman--Kolmogorov.
\item The semigroup propagates observables by $P_tf(x)=\E_x[f(X_t)]$ and propagates laws dually by $\mu\mapsto\mu P_t$.
\item The generator $\mathcal{A}=\lim_{t\downarrow0}(P_t-\Id)/t$ is the infinitesimal law of motion; the semigroup is its integrated flow.
\item Kolmogorov equations convert the generator into PDEs, backward for observables and forward for densities, and the two are exact adjoints by integration by parts.
\item Geometrically, the forward equation is the Wasserstein gradient flow of free energy (Theorem~\ref{thm:jko}); for OU, mean reversion is $W_2$ contraction at the spectral-gap rate $\theta$. Forecasting, estimation, and law-evolution are the Hilbert, Fisher--Rao, and Wasserstein faces of one geometry (\S\ref{sec:threegeom}).
\item Pathwise dynamics represent the same law by an SDE; It\^o's formula splits every smooth observable into predictable drift plus martingale innovation, and taking expectations returns the backward equation.
\item Martingales are the residue of unpredictability. Doob--Meyer separates trend from surprise; martingale representation expresses surprise through driving noise.
\item Filtering replaces direct observation by conditional laws given noisy data; innovations are the observation-level martingale surprises, and the Kalman and control Riccati equations are dual under the standard linear--quadratic--Gaussian transposition/time-reversal correspondence.
\item Finance prices discounted payoffs as martingales under an equivalent measure; in Markov models this is Feynman--Kac, hence a backward PDE, hence the explicit Vasicek yield curve for OU.
\item Control moves from forecasting to choosing among futures through the dynamic programming principle, the controlled tower property, and the HJB equation.
\item Asymptotics describe the far future: ergodicity gives typical long-run averages, large deviations quantify improbable excursions, and both are governed by the spectrum of $\mathcal{A}$.
\item The same spine carries two contrasting closed-form threads: OU (Gaussian, real line, Hermite spectrum, linear term structure) and CIR (non-Gaussian, half-line, Laguerre spectrum, Riccati term structure), with identical drift, mean reversion, and relaxation rates $-n\theta$ (Table~\ref{tab:oucir}) --- evidence that the construction is not Gaussian-specific.
\item Every object is in practice estimated; the identities are exact for the true generator (Proposition~\ref{prop:generatorrisk} bounds the error of an estimated one), and the OU thread is estimable in closed form at the parametric rate while its long-horizon objects remain fragile (\S\ref{sec:estimation}).
\end{enumerate}

In one line: the present information set determines a conditional law; under Markov compression a generator determines the forward evolution of that law; conditional expectation projects it into a forecast; the martingale component is the part of the future no admissible information can remove; and in practice the generator itself is estimated, so the forecast is only as good as that estimate.

\paragraph{On rigor.} The statements above are standard theorems of probability, stochastic analysis, semigroup theory, optimal transport, filtering, mathematical finance, control, ergodic theory, and large deviations. For readability several results are stated in representative forms rather than maximal generality. In applications the technical hypotheses matter: completeness and right-continuity of filtrations; existence of regular conditional probabilities; domain and boundary conditions for generators; Lipschitz or weak-solution assumptions for SDEs; integrability and class (D) conditions for Doob--Meyer; smoothness and convexity of $\Psi$ for the Wasserstein gradient-flow and contraction statements; local boundedness or sigma-martingale refinements for the fundamental theorem of asset pricing; observability and stabilizability for filtering and control; recurrence and irreducibility for ergodicity; and exponential tightness for large deviations. The central message is invariant across these variants: forecasting is conditional law, dynamics is law propagation, surprise is martingale residual, and the link to data is an estimated generator.

\appendix
\section{Selected derivations}\label{app:derivations}

The bridges of the main text are recorded here in full for the diffusion case, and the closed forms asserted in the Ornstein--Uhlenbeck thread are derived. Throughout, boundary terms vanish under the standing assumption that test functions and densities decay at infinity.

\subsection{The forward operator as the adjoint of the generator}\label{app:adjoint}
Let $\mathcal{A}f=bf'+\tfrac12 a f''$ in one dimension, with $a=\sigma^2$. For $f,g$ smooth and decaying, two integrations by parts give
\[
\int b f' g\dd x=-\int f\,(bg)'\dd x,\qquad
\int \tfrac12 a f'' g\dd x=\int f\,\tfrac12 (ag)''\dd x,
\]
so that $\inner{\mathcal{A}f}{g}=\int f\big(-(bg)'+\tfrac12(ag)''\big)\dd x=\inner{f}{\mathcal{A}^*g}$ with
\[
\mathcal{A}^*g=-(bg)'+\tfrac12(ag)'',
\]
which is the one-dimensional forward operator of \eqref{eq:forward}. The forward Kolmogorov equation $\partial_t p=\mathcal{A}^*p$ is therefore the adjoint of the backward equation $\partial_t u=\mathcal{A}u$, exactly as the duality \eqref{eq:duality} requires.

\subsection{The Ornstein--Uhlenbeck transition law and invariant measure}\label{app:ou}
Write $Y_t=X_t-\mu$, so $\dd Y_t=-\theta Y_t\dd t+\sigma\dd W_t$. The integrating factor $\e^{\theta t}$ gives $\dd(\e^{\theta t}Y_t)=\sigma\e^{\theta t}\dd W_t$, hence
\[
X_t=\mu+(X_0-\mu)\e^{-\theta t}+\sigma\!\int_0^t\e^{-\theta(t-s)}\dd W_s,
\]
which is \eqref{eq:ousol}. The stochastic integral has a deterministic integrand, so $X_t$ is Gaussian with mean $\mu+(x-\mu)\e^{-\theta t}$ and, by the It\^o isometry,
\[
\Var(X_t\mid X_0=x)=\sigma^2\!\int_0^t\e^{-2\theta(t-s)}\dd s=\frac{\sigma^2}{2\theta}\big(1-\e^{-2\theta t}\big),
\]
recovering $m(t,x)$ and $v(t)$ of Thread~\ref{ou:semigroup}. For the invariant law, set the stationary probability flux $J=\theta(\mu-y)p-\tfrac{\sigma^2}{2}p'$ to zero. The choice $p_\infty(y)\propto\exp(-\theta(y-\mu)^2/\sigma^2)$ gives $p_\infty'/p_\infty=-2\theta(y-\mu)/\sigma^2$, whence $J=\theta(\mu-y)p_\infty+\theta(y-\mu)p_\infty=0$ and so $\mathcal{A}^*p_\infty=-J'=0$. Thus $p_\infty=\mathcal{N}(\mu,\sigma^2/2\theta)$ as in \eqref{eq:ouinvariant}.

\subsection{The Vasicek term structure from the affine ansatz}\label{app:vasicek}
With $\tau=T-t$ and the ansatz $P=\exp(\mathfrak{a}(\tau)-\mathfrak{b}(\tau)r)$, $\mathfrak{a}(0)=\mathfrak{b}(0)=0$, the derivatives are $\partial_t P=-(\mathfrak{a}'-\mathfrak{b}'r)P$, $\partial_r P=-\mathfrak{b}P$, $\partial_{rr}P=\mathfrak{b}^2P$. Substituting into the term-structure PDE of Thread~\ref{ou:pricing} and dividing by $P$,
\[
-(\mathfrak{a}'-\mathfrak{b}'r)-\theta(\mu-r)\mathfrak{b}+\tfrac{\sigma^2}{2}\mathfrak{b}^2-r=0.
\]
Matching powers of $r$ separates this into
\[
\mathfrak{b}'=1-\theta\mathfrak{b},\qquad \mathfrak{a}'=\tfrac{\sigma^2}{2}\mathfrak{b}^2-\theta\mu\,\mathfrak{b}.
\]
The first is linear with $\mathfrak{b}(0)=0$, giving $\mathfrak{b}(\tau)=(1-\e^{-\theta\tau})/\theta$; substituting and integrating the second yields the closed form
\[
\mathfrak{a}(\tau)=\Big(\mu-\frac{\sigma^2}{2\theta^2}\Big)\big(\mathfrak{b}(\tau)-\tau\big)-\frac{\sigma^2}{4\theta}\,\mathfrak{b}(\tau)^2,
\]
which is the Vasicek bond price used in Thread~\ref{ou:pricing} (direct differentiation confirms $\mathfrak{a}'=\tfrac{\sigma^2}{2}\mathfrak{b}^2-\theta\mu\mathfrak{b}$ using $\mathfrak{b}'-1=-\theta\mathfrak{b}$).

\subsection{The linear--quadratic regulator and its Riccati equation}\label{app:lqr}
For controlled OU $\dd X_t=(-\theta X_t+a_t)\dd t+\sigma\dd W_t$ and cost density $qx^2+\rho a^2$, the HJB equation \eqref{eq:hjb} reads
\[
-\partial_t V=\inf_{a}\Big\{qx^2+\rho a^2+(-\theta x+a)\partial_x V+\tfrac{\sigma^2}{2}\partial_{xx}V\Big\}.
\]
The inner minimization gives $a^\star=-\partial_x V/(2\rho)$; substituting,
\[
-\partial_t V=qx^2-\frac{(\partial_x V)^2}{4\rho}-\theta x\,\partial_x V+\tfrac{\sigma^2}{2}\partial_{xx}V.
\]
The quadratic ansatz $V(t,x)=\Pi_t x^2+\varphi_t$ has $\partial_x V=2\Pi_t x$ and $\partial_{xx}V=2\Pi_t$. Matching the coefficient of $x^2$ gives the control Riccati equation
\[
-\dot{\Pi}_t=-2\theta\Pi_t+q-\rho^{-1}\Pi_t^2,\qquad \Pi_T=g,
\]
and the optimal feedback $a^\star_t=-\rho^{-1}\Pi_t X_t$, as in Thread~\ref{ou:control}; the constant term obeys $\dot{\varphi}_t=-\sigma^2\Pi_t$, $\varphi_T=0$. Comparing with the Kalman covariance equation \eqref{eq:kbric} exhibits the two Riccati equations as dual under time reversal and transposition, the linear--quadratic--Gaussian duality of estimation and control.

\subsection{The CIR invariant measure and term-structure Riccati}\label{app:cir}
For CIR the forward operator is $\mathcal{A}^*p=-\partial_y(\theta(\mu-y)p)+\tfrac12\sigma^2\partial_{yy}(yp)$. Setting the stationary flux $J=\theta(\mu-y)p-\tfrac12\sigma^2\partial_y(yp)$ to zero and trying $p_\infty(y)=Cy^{\alpha-1}\e^{-\beta y}$ gives $\partial_y(yp_\infty)=p_\infty(\alpha-\beta y)$, so
\[
J=\theta(\mu-y)p_\infty-\tfrac12\sigma^2 p_\infty(\alpha-\beta y)
=p_\infty\Big[\big(\theta\mu-\tfrac12\sigma^2\alpha\big)-\big(\theta-\tfrac12\sigma^2\beta\big)y\Big].
\]
Both brackets vanish for $\beta=2\theta/\sigma^2$ and $\alpha=2\theta\mu/\sigma^2$, so $J\equiv0$ and $\mathcal{A}^*p_\infty=-\partial_y J=0$: the invariant law is $\mathrm{Gamma}(\alpha,\beta)$, equation \eqref{eq:cirinvariant}, with mean $\alpha/\beta=\mu$ and variance $\alpha/\beta^2=\mu\sigma^2/2\theta$. For the bond price, the affine ansatz $P=\exp(\mathfrak{a}(\tau)-\mathfrak{b}(\tau)r)$ in the CIR term-structure PDE of Thread~\ref{cir:pricing} gives, on collecting powers of $r$ (the $\tfrac12\sigma^2 r\,\partial_{rr}P$ term now contributes $\tfrac12\sigma^2 r\,\mathfrak{b}^2$ to the $r$-coefficient),
\[
\mathfrak{b}'=1-\theta\mathfrak{b}-\tfrac12\sigma^2\mathfrak{b}^2,\qquad \mathfrak{a}'=-\theta\mu\,\mathfrak{b},\qquad \mathfrak{a}(0)=\mathfrak{b}(0)=0.
\]
The $\mathfrak{b}$-equation is a constant-coefficient Riccati; with $\gamma=\sqrt{\theta^2+2\sigma^2}$ its solution is $\mathfrak{b}(\tau)=2(\e^{\gamma\tau}-1)/[(\gamma+\theta)(\e^{\gamma\tau}-1)+2\gamma]$ (one checks $\mathfrak{b}(0)=0$, $\mathfrak{b}'(0)=1$, and the steady state $\tfrac12\sigma^2\mathfrak{b}^2+\theta\mathfrak{b}-1=0$), and integrating $\mathfrak{a}'=-\theta\mu\mathfrak{b}$ gives the closed form quoted in Thread~\ref{cir:pricing} \citep{CIR1985}. The quadratic term, absent for Vasicek, is exactly the contribution of the $\sqrt{r}$ diffusion.

\bibliographystyle{plainnat}
\bibliography{modeling_future_refs}

\end{document}